\documentclass[journal]{IEEEtran}

\PassOptionsToPackage{hyphens}{url}

\usepackage{amsmath,amssymb,amsfonts,amsthm,mathrsfs}
\usepackage{algorithm}
\usepackage{algpseudocode}
\usepackage{adjustbox}
\usepackage{array}
\usepackage{booktabs}
\usepackage{cite}
\usepackage{graphicx}
\usepackage{makecell}
\usepackage{stfloats}
\usepackage{textcomp}
\usepackage{tabularx}
\usepackage{threeparttable}
\usepackage{xcolor}
\usepackage[hidelinks]{hyperref}
\usepackage{extarrows}

\hypersetup{breaklinks=true}

\definecolor{revisionblue}{RGB}{0,112,192}
\newif\ifmarkrevisions
\markrevisionstrue
\makeatletter
\def\rev@blue{0 0.43922 0.75294 rg 0 0.43922 0.75294 RG}
\def\rev@black{0 g 0 G}
\newcommand{\rev@set}[1]{%
  \let\current@color#1%
  \pdfcolorstack\@pdfcolorstack set{\current@color}%
}
\DeclareRobustCommand{\revstart}{\ifmarkrevisions\leavevmode\rev@set\rev@blue\fi}
\DeclareRobustCommand{\revend}{\ifmarkrevisions\rev@set\rev@black\fi}

\DeclareRobustCommand{\revcolor}{\ifmarkrevisions\rev@set\rev@blue\fi}
\makeatother

\def\beq{\begin{equation}}
\def\eeq{\end{equation}}
\def\bea{\begin{eqnarray}}
\def\eea{\end{eqnarray}}
\def\ba{\begin{array}}
\def\ea{\end{array}}

\def\bitem{\begin{itemize}}
\def\eitem{\end{itemize}}
\def\ben{\begin{enumerate}}
\def\een{\end{enumerate}}

\definecolor{bgrd}{rgb}{1,1,1}
\definecolor{gray}{rgb}{0.5,0.5,0.5}
\definecolor{dkr}{rgb}{0.7,0.1,0.2}
\definecolor{dkb}{rgb}{0.1,0.1,0.8}

\def\edoc{\end{document}}

\newlength{\eqtagspace}
\newcommand{\eqfitu}[1]{%
  \adjustbox{max width=\linewidth}{\(\displaystyle #1\)}%
}

\newcolumntype{Y}{>{\raggedright\arraybackslash}X}
\newcolumntype{Z}{>{\centering\arraybackslash}X}
\newcolumntype{L}[1]{>{\raggedright\arraybackslash}p{#1}}
\newcolumntype{C}[1]{>{\centering\arraybackslash}p{#1}}

\newtheorem{proposition}{Proposition}

\newtheorem{definition}{Definition}
\newtheorem{remark}{Remark}

\begin{document}

\title{Reachable-Set Decomposition for Real-Time Aggregation of Multi-Zone HVAC Fleets}

\author{Jingguan Liu,~Xiaomeng Ai,~Cong Chen,~Shaoze Li,~Shichang Cui,~Jiakun Fang,~and Jinyu Wen
\thanks{This work was supported by the National Natural Science Foundation of China under Grant U25B6017.~\textit{(Corresponding author:~Xiaomeng Ai.)}}%
\thanks{Jingguan Liu, Xiaomeng Ai, Shichang Cui, Jiakun Fang, and Jinyu Wen are with the State Key Laboratory of Advanced Electromagnetic Technology, Huazhong University of Science and Technology, Wuhan 430074, China (e-mail: \url{spencerplusmail@foxmail.com}; \url{xiaomengai@hust.edu.cn};\url{shichang_cui@hust.edu.cn}; \url{jfa@hust.edu.cn}; \url{jinyu.wen@hust.edu.cn}).}%
\thanks{Cong Chen and Shaoze Li are with the Thayer School of Engineering, Dartmouth College, Hanover, NH 03755, USA (e-mail: \url{Cong.Chen@dartmouth.edu}; \url{shaoze_li@163.com}).}
}

\markboth{under review}%
{Liu \MakeLowercase{\textit{et al.}}: Reachable-Set Decomposition for Real-Time Aggregation of Multi-Zone HVAC Fleets}

\maketitle

\begin{abstract}
Aggregating building heating, ventilation, and air-conditioning (HVAC) fleets can provide substantial real-time flexibility to power systems. However, practical deployment requires scalable characterization of multi-zone, multi-period flexibility and guaranteed feasible disaggregation as temperature states and exogenous inputs are revealed sequentially. This paper develops an offline–online reachable-set decomposition framework that combines offline temporal decomposition with a scalable online fleet coordination policy. Offline, backward reachable sets encode remaining-horizon feasibility as per-period state constraints, while tailored polytopic inner approximations provide tractable representations of these sets for thermally coupled multi-zone buildings. Online, aggregate flexibility is computed through parallel, single-period building-level linear programs followed by the fleet-level closed-form Minkowski summation of power intervals, thereby avoiding operations on high-dimensional full-horizon sets. The resulting endpoint profiles enable a closed-form, time-causal policy that disaggregates any aggregate HVAC power command within the reported interval while preserving recursive feasibility. Case studies show that, relative to the considered fixed full-horizon baselines, the proposed framework captures greater aggregate flexibility by recomputing per-period intervals using the latest state and exogenous information, while maintaining feasible disaggregation under sequentially revealed uncertainty. Its building-separable computation further enables scalable implementation for large fleets of multi-zone buildings.
\end{abstract}

\begin{IEEEkeywords}
Building HVAC systems, reachable set, demand-side aggregation, time-causal operation.
\end{IEEEkeywords}

\section{Introduction}\label{sec:intro}
\IEEEPARstart{B}{uilding} heating, ventilation, and air-conditioning (HVAC) systems are promising demand-side flexible resources for supporting real-time power system balancing \cite{wang2023control}. Owing to building thermal inertia and occupants' comfort tolerance, HVAC power consumption can be flexibly increased or decreased without immediately violating indoor temperature constraints \cite{tian2022real}. This capability is increasingly useful for real-time balancing as renewable generation becomes more variable \cite{zhao2025analytical}. However, individual HVAC systems are typically too small to participate independently in power system operations, while their large population makes direct coordination computationally challenging \cite{liu2025leveraging}. To overcome these barriers, policies such as FERC Order No. 2222 \cite{FERC_Order2222} have enabled \emph{aggregators} to serve as an interface between the power system and distributed energy resources \cite{chen2024wholesale}. In real time, an aggregator coordinates a large HVAC fleet to meet participation thresholds, characterizes the fleet's aggregate flexibility for power system operation, and disaggregates the received aggregate HVAC power command into feasible zone-level control actions \cite{zhao2017geometric}.

Mathematically, each building's flexibility can be characterized by projecting its feasible set from the zone-level state--control space onto the building-level total-power space. The fleet-level aggregate flexibility is then given by the Minkowski sum of the resulting projected sets \cite{liu2024continuous}. However, real-time aggregation of HVAC fleets with multi-period and multi-zone coupling faces two fundamental challenges: efficiently characterizing high-dimensional spatiotemporally coupled flexibility and ensuring feasible period-by-period operation as uncertain exogenous information is revealed sequentially.

First, HVAC flexibility is strongly coupled across time and zones. Thermal dynamics induce multi-period coupling, while heat transfer among zones creates cross-zone coupling within each building \cite{cordova2023aggregate}. Cross-zone coupling yields a high-dimensional feasible set in the zone-level state--control space, making its projection onto the building-level total-power space computationally demanding \cite{liu2025coupling}. Even if these projections are obtained, multi-period coupling makes each projected power set high-dimensional, rendering direct computation of their fleet-level Minkowski sum computationally prohibitive \cite{al2024efficient}.

Second, real-time HVAC fleet operation must remain feasible as uncertain exogenous information is revealed sequentially \cite{rousseau2025uncertainty,tian2022real}. At each period, the aggregator observes the current thermal state and exogenous realization and reports an aggregate flexibility set to the power system operator \cite{liu2026scalable}. The operator selects a power command from this set, which the aggregator then disaggregates into zone-level HVAC actions \cite{liu2025leveraging}. These actions determine the next thermal state and thus future flexibility. Consequently, a currently feasible command may leave no feasible action at a later period under some admissible future exogenous-input sequence \cite{lorca2016multistage}. Therefore, the aggregation and disaggregation policy must be both \emph{time-causal}, using only exogenous information revealed up to the current period, and \emph{recursively feasible}, ensuring that every command in the reported set can be disaggregated while preserving a feasible continuation under every admissible future exogenous-input sequence \cite{zhao2025analytical}. Both properties are essential for reliable real-time aggregation and disaggregation, yet ensuring them is challenging because future system states and exogenous-input realizations remain unknown when current decisions are made.

These challenges call for an aggregation framework that can efficiently handle high-dimensional spatiotemporal coupling while supporting time-causal and recursively feasible real-time operation under sequential information revelation.

\subsection{Related Work}

Existing methods for HVAC flexibility aggregation can be broadly classified into three categories.

\emph{Direct summation} methods \cite{wu2026energy,song2018thermal,han2024analytical} represent individual HVAC systems using single-state thermal-battery models and sum their power and thermal-state limits to obtain aggregate flexibility. This direct-summation construction makes aggregation computationally efficient, while the resulting battery states can be updated sequentially to support time-causal operation. However, these methods generally rely on homogeneous, low-dimensional model structures and therefore cannot capture the heterogeneous, high-dimensional cross-zone coupling of practical multi-zone HVAC systems. Moreover, direct summation generally yields an outer approximation of the true aggregate flexibility \cite{barot2017concise}, which may overestimate the available flexibility and include power trajectories that cannot be feasibly disaggregated. These methods also do not guarantee recursive feasibility as uncertain exogenous information is revealed sequentially.

\emph{Aggregate boundary optimization} methods characterize aggregate flexibility by solving population-level boundary optimization problems to construct inner approximations, such as boxes \cite{chen2021leveraging,chen2024unlock}, ellipsoids \cite{wang2025non,hreinsson2021new}, and vertex-based models \cite{ye2025aggregation,hao2025non}. By explicitly incorporating each building's multi-zone operational model, these methods can accommodate heterogeneity across buildings and cross-zone coupling within buildings. The inner-approximation construction also ensures that the reported aggregate power trajectories admit feasible disaggregation under the modeled forecast conditions. However, jointly handling all HVAC models in a large population-level optimization problem limits scalability \cite{liu2025coupling}. Moreover, although some representations, such as boxes, allow aggregate power to be selected period by period in a time-causal manner, this property alone does not guarantee recursive feasibility under sequentially revealed exogenous uncertainty, thereby raising operational reliability concerns.

\emph{Geometric transformation} methods \cite{zhao2017geometric,al2024efficient,liu2025leveraging,liu2025coupling} first construct an inner approximation of each building's full-horizon flexibility set and then aggregate these approximations using closed-form Minkowski sums. Because the building-level approximations can be constructed independently and in parallel and their Minkowski sums are available in closed form, these methods scale well to large fleets. These methods can also support time-causal power selection at each period through structure-preserving geometric transformations \cite{liu2025leveraging}. 
However, because these sets are typically constructed offline for a fixed full-horizon forecast and are not recomputed from the realized current state and exogenous input, their reported flexibility cannot adapt to newly revealed information. Consequently, the resulting disaggregation does not generally remain recursively feasible under sequential deviations from the forecast \cite{lorca2016multistage}. In addition, constructing a tractable approximation of the full-horizon flexibility set can become increasingly conservative as the scheduling horizon and building state dimension grow \cite{liu2025coupling}.

We also note that recent work has investigated real-time aggregation and disaggregation for storage-like distributed energy resources, such as battery energy storage systems \cite{zheng2024capacity,yu2025flexibility} and electric-vehicle (EV) fleets \cite{yan2023real,xu2025frequency}. These methods sequentially update aggregate flexibility or disaggregation decisions based on observed resource states and EV arrivals, thereby supporting time-causal operation. However, they rely primarily on low-dimensional energy-state dynamics or queueing structures tailored to batteries and electric vehicles. Consequently, they neither directly extend to multi-zone HVAC fleets with high-dimensional cross-zone thermal coupling nor guarantee recursive feasibility for such HVAC fleets as uncertain exogenous information is revealed sequentially.

As summarized in Table~\ref{tab:research_gaps}, existing methods do not simultaneously meet all the key requirements for real-time multi-zone HVAC fleet aggregation. Specifically, a practical framework must accommodate high-dimensional cross-zone coupling, efficiently account for multi-period coupling in real time, operate time-causally as thermal states are observed and exogenous inputs are revealed sequentially, guarantee recursive feasibility of disaggregation for every admissible future exogenous-input sequence, and scale to large fleets.

\begin{table*}[!t]
	\centering
	\caption{Comparison of representative aggregation methods against key requirements for real-time multi-zone HVAC aggregation.}
	\label{tab:research_gaps}
	\setlength{\tabcolsep}{3pt}
	\renewcommand{\arraystretch}{1.10}
	\scriptsize
	\renewcommand{\tabularxcolumn}[1]{m{#1}}
	\begin{tabularx}{\textwidth}{@{}
			>{\raggedright\arraybackslash}m{0.15\textwidth}
			>{\centering\arraybackslash}m{0.1\textwidth}
			>{\centering\arraybackslash}m{0.1\textwidth}
			>{\raggedright\arraybackslash}X
			>{\centering\arraybackslash}m{0.15\textwidth}
			>{\centering\arraybackslash}m{0.15\textwidth}@{}}
		\toprule
		\textbf{Method and Refs.}
		& \makecell{\textbf{Time-causal}\\\textbf{operation}}
		& \makecell{\textbf{Cross-zone}\\\textbf{coupling}}
		& \makecell[l]{\textbf{Multi-period aggregation treatment}\\\textbf{in real-time}}
		& \makecell{\textbf{Recursive feasibility}\\\textbf{under sequential}\\\textbf{information revelation}}
		& \makecell{\textbf{Scalability to}\\\textbf{large multi-zone}\\\textbf{HVAC fleets}} \\
		\midrule
		
		Direct summation \cite{wu2026energy,song2018thermal,han2024analytical}
		& Yes
		& Not applicable
		& Outer battery-state summation
		& Not guaranteed
		& Not applicable \\
		
		Aggregate boundary optimization \cite{chen2021leveraging,chen2024unlock,wang2025non,
			hreinsson2021new,ye2025aggregation,hao2025non}
		& Method-dependent
		& Applicable
		& Fixed fleet-level full-horizon inner approximation
		& Not guaranteed
		& Limited \\
		
		Geometric transformation \cite{zhao2017geometric,al2024efficient,liu2025leveraging,liu2025coupling}
		& Yes
		& Applicable
		& Fixed building-level full-horizon inner approximation plus closed-form Minkowski sum
		& Not guaranteed
		& Yes \\
		
		Real-time battery and EV aggregation \cite{zheng2024capacity,yu2025flexibility,
			yan2023real,xu2025frequency}
		& Yes
		& Not applicable
		& Sequential low-dimensional state update
		& Not guaranteed
		& Not applicable \\
		
		\midrule
		\textbf{This work}
		& \textbf{Yes}
		& \textbf{Yes}
		& \textbf{Per-period reachable-set decomposition plus dynamic exact Minkowski sum of power intervals in real time}
		& \textbf{Guaranteed}
		& \textbf{Yes} \\
		\bottomrule
	\end{tabularx}
\end{table*}

To bridge this gap, we draw on \emph{backward reachability analysis} from control theory \cite{kurzhanskiy2011reach,attar2023data,wetzlinger2025backward}, which characterizes the states from which a dynamic system can satisfy future safety constraints under uncertain exogenous inputs. We adapt this concept to real-time multi-zone HVAC aggregation by using backward reachable sets to encode each building's remaining-horizon feasibility as per-period state constraints. The resulting temporal decomposition enables aggregate flexibility to be computed from current observations and disaggregated period by period while preserving a feasible continuation for every admissible future exogenous-input sequence. To the best of our knowledge, this is the first real-time aggregation framework for multi-zone HVAC fleets that uses backward reachable sets to support time-causal operation and guarantee recursive feasibility under sequential information revelation.

\subsection{Main Contributions}

To summarize, this paper makes the following contributions:

\subsubsection{Offline Decomposition}
We develop an offline temporal decomposition method for multi-zone HVAC systems by adapting backward reachability analysis. The resulting exact backward reachable sets characterize the thermal states from which feasible operation can be maintained over the remaining horizon for every admissible future exogenous-input sequence, thereby providing an equivalent representation of multi-period coupling in terms of per-period state constraints (Proposition~\ref{prop:CausalFeasibility}). To handle high-dimensional cross-zone coupling, we further construct a tailored polytopic inner approximation of each reachable set in affine-image form. This approximation preserves the remaining-horizon feasibility guarantee while retaining a linear representation. We further reformulate its construction as a linear program, enabling tractable offline computation and direct integration into real-time operation (Proposition~\ref{prop:SetContain}).

\subsubsection{Real-Time Policy}
Based on the offline-computed reachable-set representation, we develop a time-causal and scalable real-time aggregation and disaggregation policy for large HVAC fleets. At each period, the policy recomputes aggregate flexibility conditional on the current thermal state and revealed exogenous input through parallel building-level linear programs and the exact closed-form Minkowski sum of the resulting power intervals. This avoids online projection and high-dimensional Minkowski sums of full-horizon power sets. Moreover, any aggregate HVAC power command within the reported interval admits a closed-form zone-level disaggregation that satisfies current-period constraints and keeps each building's next thermal state within its certified reachable set. This preserves recursive feasibility under every admissible future exogenous-input sequence and guarantees the realizability of the reported aggregate flexibility (Proposition~\ref{prop:Disaggregation}).

Although this paper focuses on multi-zone HVAC fleets, the proposed framework can in principle be extended to the real-time aggregation of other distributed energy resources whose dynamics admit linear models and whose operational constraints admit convex polyhedral representations or approximations, such as home batteries and electric vehicles.

\subsection{Paper Organization and Notations}

The remainder of the paper is organized as follows. Section~\ref{sec:ProblemDescription} formulates building- and fleet-level flexibility and the real-time aggregation problem. Section~\ref{sec:AggregationFramework} develops the reachable-set decomposition, tractable inner approximation, and offline--online aggregation and disaggregation policy. Section~\ref{sec:case} presents the numerical results. Section~\ref{sec:conclusion} concludes.

Regarding notation, $\bigoplus$ denotes the Minkowski sum. Throughout, $:= $ means ``is defined as.'' For $\mathcal X\subseteq\mathbb R^{n_x+n_u}$, $\boldsymbol x\in\mathbb R^{n_x}$, and $\boldsymbol u\in\mathbb R^{n_u}$, the projection onto the coordinates associated with $\boldsymbol x$ is
\[
\operatorname{Proj}_{\boldsymbol x}(\mathcal X)
:=\left\{\boldsymbol x\,\middle|\,\exists\,\boldsymbol u\ \text{s.t.}\
[\boldsymbol x^\top,\boldsymbol u^\top]^\top\in\mathcal X\right\}.
\]
For a compact polytope $\mathcal Y$, $\operatorname{Vert}(\mathcal Y)$ denotes its vertex set. The symbol $\boldsymbol I_n$ denotes the $n\times n$ identity matrix, while $\boldsymbol 0_n$ and $\boldsymbol 0_{m\times n}$ denote the $n$-dimensional zero vector and the $m\times n$ zero matrix, respectively. The main notation is summarized in Table~\ref{tab:notation}, with additional ones defined as needed.

\begin{table}[!t]
	\caption{Nomenclature}
	\label{tab:notation}
	\footnotesize
	\centering
	\renewcommand{\arraystretch}{1.05}
	\begin{tabularx}{\columnwidth}{@{}>{\raggedright\arraybackslash}p{0.2\columnwidth}>{\raggedright\arraybackslash}X@{}}
		\toprule
		\textbf{Symbol} & \textbf{Description} \\
		\midrule
		\multicolumn{2}{@{}l}{\textit{Indices and Sets}} \\
		$b\in\mathcal B$ & Building index $b$ and building set $\mathcal B$ \\
		$i\in\mathcal I_b$ & Zone index $i$ and zone set $\mathcal I_b$ of building $b$ \\
		$t\in\mathcal T$ & Period index $t$ and period set $\mathcal T$ \\
		$\mathcal W_{b,t}$ & Admissible set of exogenous-input realizations at period $t$ \\
		$\mathcal Y_b^{\mathrm{ac}}$ & Full-horizon joint state--control feasible set \\
		$\mathcal U_b^{\mathrm{ac}}$ & Full-horizon zone-power flexibility set \\
		$\mathcal U_b^{\mathrm{bld}}$ & Full-horizon building total-power flexibility set \\
		$\mathcal U^{\mathrm{agg}}$ & Full-horizon fleet aggregate-power flexibility set \\
		$\mathcal X_{b,t}^{\mathrm{rch}}$ & Exact backward reachable temperature-state set \\
		$\mathcal X_{b,t}^{\mathrm{pre}}$ & Robust one-step predecessor of the next-period affine set \\
		$\mathcal X_{b,t}^{\mathrm{aff}}$ & Certified affine-image inner approximation of the backward reachable set \\
		$\widetilde{\mathcal Y}_{b,t}^{\mathrm{ac}}$ & Real-time per-period joint state--control feasible set \\
		$\widetilde{\mathcal U}_{b,t}^{\mathrm{bld}}$ & Real-time building total-power flexibility interval \\
		$\widetilde{\mathcal U}_{t}^{\mathrm{agg}}$ & Real-time fleet aggregate-power flexibility interval \\
		\midrule
		\multicolumn{2}{@{}l}{\textit{Parameters}} \\
		$\Delta$ & Duration of each scheduling period \\
		$c_{b,i}^{\mathrm{in}}$ & Thermal capacitance of zone $i$ \\
		$r_{b,ij}^{\mathrm{in}},\ r_{b,i}^{\mathrm{out}}$ & Interzone and zone-to-ambient thermal resistances \\
		$q_{b,t},\ \eta_{b,i}^{\mathrm{gain}}$ & Heat gain and zone heat-gain mapping coefficient \\
		$\eta_{b,i}^{\mathrm{ac}}$ & HVAC efficiency coefficient \\
		$x_{b,i,t}^{\mathrm{set}},\ \beta_{b,i,t}^{\mathrm{set}}$ & Temperature setpoint and comfort tolerance half-band \\
		$\underline u_{b,i}^{\mathrm{ac}},\ \overline u_{b,i}^{\mathrm{ac}}$ & Minimum and maximum HVAC power limits \\
		$\boldsymbol w_{b,t},\ \boldsymbol w_b$ & Period-$t$ exogenous input and full-horizon trajectory \\
		\midrule
		\multicolumn{2}{@{}l}{\textit{Variables}} \\
		$\boldsymbol x_{b,t}^{\mathrm{in}},\ \boldsymbol x_b^{\mathrm{in}}$ & Indoor-temperature state vector and full-horizon trajectory \\
		$\boldsymbol u_{b,t}^{\mathrm{ac}},\ \boldsymbol u_b^{\mathrm{ac}}$ & Zone-level HVAC-power vector and full-horizon trajectory \\
		$u_{b,t}^{\mathrm{bld}},\ \boldsymbol u_b^{\mathrm{bld}}$ & Building total HVAC power and full-horizon trajectory \\
		$u_t^{\mathrm{agg}},\ u_t^{\mathrm{reg}}$ & Aggregate power and aggregate HVAC power command \\
		$u_{b,t}^{\mathrm{bld},\downarrow},\ u_{b,t}^{\mathrm{bld},\uparrow}$ & Lower and upper endpoints of the building total-power interval \\
		$\boldsymbol\gamma_{b,t}^{\mathrm{aff}},\ \boldsymbol\Gamma_{b,t}^{\mathrm{aff}}$ & Translation vector and generator matrix of the affine-image set \\
		$\lambda_t$ & Common endpoint-interpolation coefficient for disaggregation \\
		\bottomrule
	\end{tabularx}
	\vspace{-4mm}
\end{table}

\section{Problem Description}\label{sec:ProblemDescription}

This section formulates the operational model of multi-zone building HVAC fleets and describes the real-time coordination process among building HVAC systems, the HVAC aggregator, and the power system operator.

\subsection{Operational Model of Multi-Zone Building HVAC Fleets}\label{subsec:Problem_Operation}

Fig.~\ref{fig:MultizoneBuilding} shows a schematic of a building equipped with a multi-zone HVAC system. Each building consists of multiple thermal zones whose indoor temperatures are jointly affected by outdoor temperature, heat gain, heat exchange with adjacent zones, HVAC input power, and previous indoor temperatures. As a result, the thermal dynamics are coupled across both zones and time \cite{cui2024decision}.

\begin{figure}[!t]
	\centering
	\includegraphics[width=0.8\linewidth]{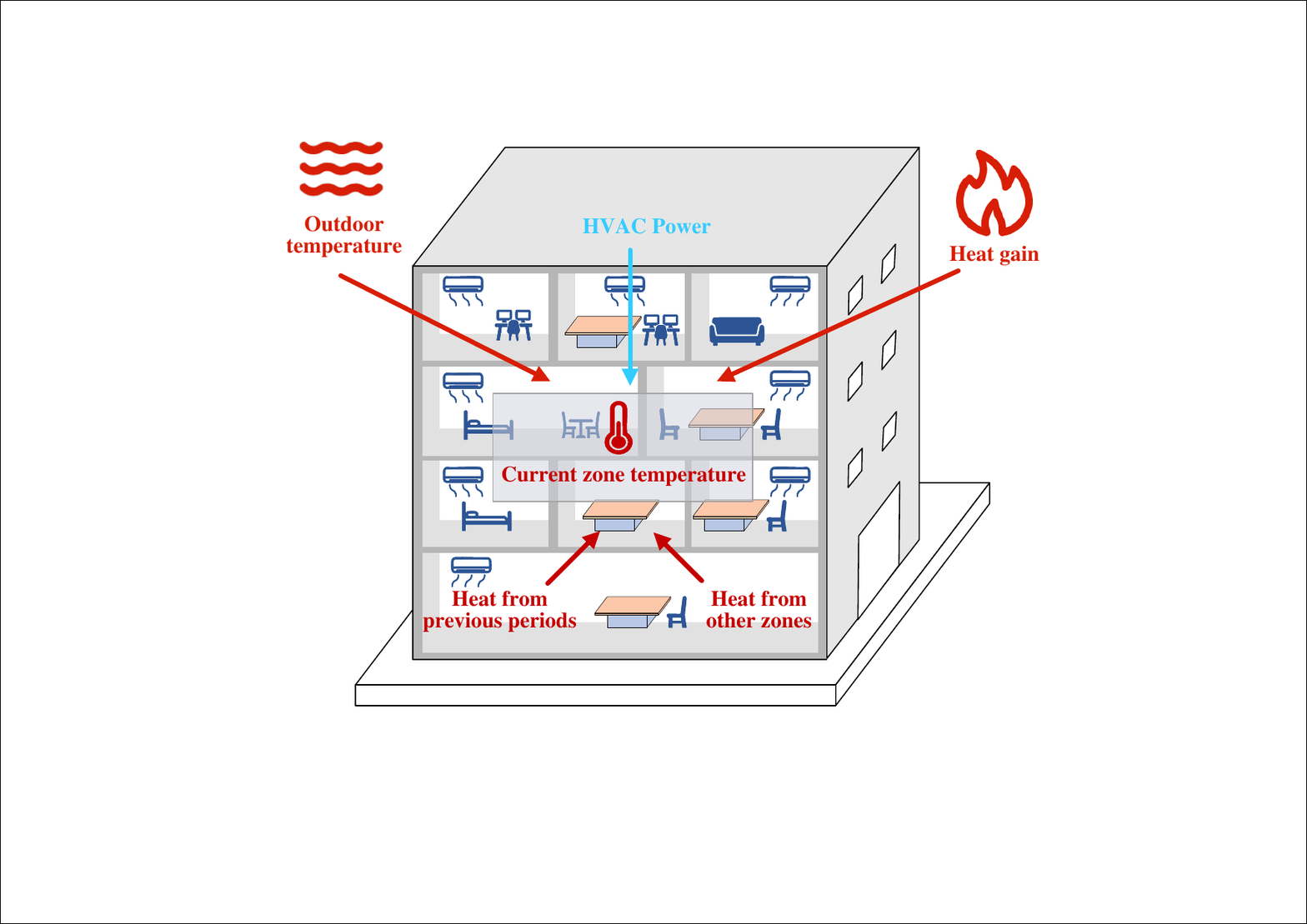}
	\caption{Illustration of a multi-zone building HVAC system.}
	\label{fig:MultizoneBuilding}
\end{figure}

We consider a fleet of $B$ buildings indexed by $b\in \mathcal{B}:=\left\{1,\ldots,B\right\}$, with each building containing $I_b$ thermal zones indexed by $i,j \in \mathcal{I}_b:=\left\{1,\ldots,I_b\right\}$. Time is discretized into $T$ equal-length periods of duration $\Delta$, indexed by $t\in \mathcal{T}:=\left\{1,\ldots,T\right\}$. 
The indoor temperature dynamics are modeled by the finite-difference equation \cite{yang2020HVAC,he2025aggregative,Aslam2024Optimal,wu2026energy}
\begin{equation}
	\eqfitu{
		\begin{aligned}
			c_{b,i}^{\mathrm{in}}\,\frac{x_{b,i,t}^{\mathrm{in}}-x_{b,i,t-1}^{\mathrm{in}}}{\Delta}
			&=\sum_{j\in\mathcal I_b\setminus\{i\}}\frac{x_{b,j,t}^{\mathrm{in}}-x_{b,i,t}^{\mathrm{in}}}{r_{b,ij}^{\mathrm{in}}}
			+\frac{x_{b,t}^{\mathrm{out}}-x_{b,i,t}^{\mathrm{in}}}{r_{b,i}^{\mathrm{out}}}\\
			&\quad+\eta _{b,i}^{\mathrm{gain}}\,q_{b,t}
			-\eta _{b,i}^{\mathrm{ac}}\,u_{b,i,t}^{\mathrm{ac}}.
		\end{aligned}
	}
	\label{eq:HVAC_thermal}
\end{equation}
Here, the state variable $x_{b,i,t}^{\mathrm{in}}$ and control variable $u_{b,i,t}^{\mathrm{ac}}$ denote the indoor temperature and HVAC power of zone $i$ in building $b$ at period $t$, respectively. The parameter $c_{b,i}^{\mathrm{in}}$ is the thermal capacitance, $r_{b,ij}^{\mathrm{in}}$ is the thermal resistance between zones, and $r_{b,i}^{\mathrm{out}}$ is the thermal resistance between zone $i$ and the ambient. The coefficient $\eta_{b,i}^{\mathrm{gain}}$ maps the heat gain to zone $i$, while $\eta_{b,i}^{\mathrm{ac}}$ is the HVAC efficiency coefficient. We initialize the indoor temperature at the first-period setpoint, i.e., $x_{b,i,0}^{\mathrm{in}}=x_{b,i,1}^{\mathrm{set}}, \forall b\in\mathcal B, i\in\mathcal I_b$.

The outdoor temperature $x_{b,t}^{\mathrm{out}}$ and the heat gain $q_{b,t}$ are time-varying exogenous inputs. Define the column vector
$
\boldsymbol w_{b,t}:=
\left[x_{b,t}^{\mathrm{out}},q_{b,t}\right]^\top
\in\mathbb R^2 .
$
To model sequential information revelation in real-time operation, we assume that the realized $\boldsymbol w_{b,t}$ lies in the compact set
\begin{equation}
	\mathcal W_{b,t}:=\left\{\boldsymbol w_{b,t}\,\middle|\,
	\underline{\boldsymbol w}_{b,t}\le\boldsymbol w_{b,t}\le
	\overline{\boldsymbol w}_{b,t}\right\}\subseteq\mathbb R^2,
	\label{eq:HVAC_uncertainty}
\end{equation}
where the vector inequalities are understood componentwise and $\mathcal W_{b,t}$ characterizes the range of exogenous realizations that may be revealed at period $t$. The interval bounds are known offline, while the realization at the current period is revealed before the current control is selected.
\begin{remark}[Scope of exogenous uncertainty]
	Following \cite{yang2020HVAC,he2025aggregative,Aslam2024Optimal,wu2026energy}, we use the lumped term $\eta_{b,i}^{\mathrm{gain}}q_{b,t}$ to represent the net heat gains from occupants, equipment, and solar radiation. Here, $q_{b,t}$ is a building-level heat-gain signal mapped to each zone through $\eta_{b,i}^{\mathrm{gain}}$. Under the adopted thermal model, these sources contribute additively to the zone heat balance, so a net heat-gain input captures their combined effect on temperature dynamics and cooling requirements. This paper focuses on scalable, time-causal aggregation and recursively feasible disaggregation given prescribed uncertainty sets. Detailed source-specific identification, forecasting, and calibration are beyond the scope of this work.
	
	As an extension, additional uncertainty sources can be included by enlarging the admissible range of $q_{b,t}$ when their net effects fit the lumped representation, or by augmenting $\boldsymbol{w}_{b,t}$ and $\mathcal{W}_{b,t}$ while retaining affine dynamics and compact polytopic uncertainty sets. The reachable-set inner approximations are then recomputed for the updated model and uncertainty sets, preserving the recursive-feasibility guarantee for all admissible exogenous-input realizations. The offline computational cost may increase with the number of uncertainty vertices. However, nonlinear or state-dependent disturbance couplings and discrete operating modes require further extensions. For example, window ventilation can introduce multiplicative coupling between uncertain airflow and the indoor--outdoor temperature difference~\cite{ryzhov2019model}. Such extensions are left for future work.
	\label{remark:scope_uncertain}
\end{remark}

For compactness, define the single-period column vectors $\boldsymbol x_{b,t}^{\mathrm{in}}:=\left[x_{b,i,t}^{\mathrm{in}}\right]_{i\in\mathcal I_b}\in\mathbb R^{I_b}$ and $\boldsymbol u_{b,t}^{\mathrm{ac}}:=\left[u_{b,i,t}^{\mathrm{ac}}\right]_{i\in\mathcal I_b}\in\mathbb R^{I_b}$. Then \eqref{eq:HVAC_thermal} can be written compactly as
\begin{equation}
	\eqfitu{
		\boldsymbol A_{b,t}^{\mathrm{eq}}
		\left[
		(\boldsymbol x_{b,t-1}^{\mathrm{in}})^\top,
		(\boldsymbol x_{b,t}^{\mathrm{in}})^\top,
		(\boldsymbol u_{b,t}^{\mathrm{ac}})^\top,
		\boldsymbol w_{b,t}^\top
		\right]^\top
		=\boldsymbol b_{b,t}^{\mathrm{eq}},
	}
	\label{eq:HVAC_thermal_compact}
\end{equation}
where $\boldsymbol A_{b,t}^{\mathrm{eq}}\in\mathbb R^{I_b\times(3I_b+2)}$ and $\boldsymbol b_{b,t}^{\mathrm{eq}}\in\mathbb R^{I_b}$ are the coefficient matrix and vector derived from \eqref{eq:HVAC_thermal}.

We impose the following temperature and power constraints:
\begin{equation}
	\eqfitu{
		x_{b,i,t}^{\mathrm{set}}-\beta _{b,i,t}^{\mathrm{set}}\le
		x_{b,i,t}^{\mathrm{in}}\le
		x_{b,i,t}^{\mathrm{set}}+\beta _{b,i,t}^{\mathrm{set}},
	}
	\label{eq:HVAC_temp_limit}
\end{equation}
\begin{equation}
	\eqfitu{
		\underline{u}_{b,i}^{\mathrm{ac}}
		\le u_{b,i,t}^{\mathrm{ac}}
		\le \overline{u}_{b,i}^{\mathrm{ac}
	}.}
	\label{eq:HVAC_power_limit}
\end{equation}
Here, $x_{b,i,t}^{\mathrm{set}}$ is the indoor temperature setpoint, $\beta_{b,i,t}^{\mathrm{set}}$ is the comfort tolerance half-band, and $\underline{u}_{b,i}^{\mathrm{ac}}$ and $\overline{u}_{b,i}^{\mathrm{ac}}$ are the minimum and maximum HVAC power limits, respectively. Equations \eqref{eq:HVAC_temp_limit} and \eqref{eq:HVAC_power_limit} impose the temperature and HVAC power limits, respectively. Combining these constraints gives
\begin{equation}
	\eqfitu{
		\boldsymbol A_{b,t}^{\mathrm{ieq}}
		\left[
		(\boldsymbol x_{b,t}^{\mathrm{in}})^\top,
		(\boldsymbol u_{b,t}^{\mathrm{ac}})^\top
		\right]^\top
		\le\boldsymbol b_{b,t}^{\mathrm{ieq}},
	}
	\label{eq:HVAC_limit_compact}
\end{equation}
where $\boldsymbol A_{b,t}^{\mathrm{ieq}}\in\mathbb R^{4I_b\times2I_b}$ and $\boldsymbol b_{b,t}^{\mathrm{ieq}}\in\mathbb R^{4I_b}$ are the coefficient matrix and vector derived from \eqref{eq:HVAC_temp_limit} and \eqref{eq:HVAC_power_limit}.

For vectors without a time subscript, define $\boldsymbol x_b^{\mathrm{in}}:=\left[\boldsymbol x_{b,t}^{\mathrm{in}}\right]_{t\in\mathcal T}\in\mathbb R^{I_bT}$, $\boldsymbol u_b^{\mathrm{ac}}:=\left[\boldsymbol u_{b,t}^{\mathrm{ac}}\right]_{t\in\mathcal T}\in\mathbb R^{I_bT}$, and $\boldsymbol w_b:=\left[\boldsymbol w_{b,t}\right]_{t\in\mathcal T}\in\mathbb R^{2T}$. For a fixed full-horizon exogenous trajectory $\boldsymbol w_b$ satisfying $\boldsymbol w_{b,t}\in\mathcal W_{b,t}$, $\forall t\in\mathcal T$, define the building-level augmented feasible set in the joint state--control space as
\begin{equation}
	\eqfitu{
		\mathcal Y_b^{\mathrm{ac}}(\boldsymbol w_b)
		:=\left\{
		\left[(\boldsymbol x_b^{\mathrm{in}})^\top,
		(\boldsymbol u_b^{\mathrm{ac}})^\top\right]^\top
		\,\middle|\,
		\eqref{eq:HVAC_thermal_compact},\ \eqref{eq:HVAC_limit_compact},\ \forall t\in\mathcal T
		\right\}
		\subseteq\mathbb R^{2I_bT}.
	}
	\label{eq:HVAC_flexibility_Y}
\end{equation}

The corresponding zone-power flexibility set is the projection of $\mathcal Y_b^{\mathrm{ac}}(\boldsymbol w_b)$ onto the HVAC power coordinates:
\begin{equation}
	\mathcal U_b^{\mathrm{ac}}(\boldsymbol w_b)
	:=\operatorname{Proj}_{\boldsymbol u_b^{\mathrm{ac}}}
	\!\left(\mathcal Y_b^{\mathrm{ac}}(\boldsymbol w_b)\right)
	\subseteq\mathbb R^{I_bT}.
	\label{eq:HVAC_flexibility_U}
\end{equation}

Because buildings may have different values of $I_b$, the zone-power sets $\mathcal U_b^{\mathrm{ac}}(\boldsymbol w_b)\subseteq\mathbb R^{I_bT}$ have different dimensions and cannot be added across buildings directly. Define instead the total HVAC power of building $b$ by $u_{b,t}^{\mathrm{bld}}:=\sum_{i\in\mathcal I_b}u_{b,i,t}^{\mathrm{ac}}$, its trajectory $\boldsymbol{u}_{b}^{\mathrm{bld}}:=[u_{b,t}^{\mathrm{bld}}]_{t\in \mathcal{T}}\in \mathbb{R} ^T$, and the building total-power trajectory set
\begin{equation}
	\eqfitu{
		\begin{aligned}
			\mathcal U_b^{\mathrm{bld}}(\boldsymbol w_b)
			:=\Bigl\{\boldsymbol u_b^{\mathrm{bld}}\Bigm|\ &
			\exists\,\boldsymbol u_b^{\mathrm{ac}}\in
			\mathcal U_b^{\mathrm{ac}}(\boldsymbol w_b),\\[-0.2em]
			&u_{b,t}^{\mathrm{bld}}=\sum_{i\in\mathcal I_b}u_{b,i,t}^{\mathrm{ac}},
			\ \forall t\in\mathcal T\Bigr\} \subseteq \mathbb{R} ^T.
		\end{aligned}
	}
	\label{eq:Building_flexibility_full_horizon}
\end{equation}

Let $\boldsymbol{w}:=\left[ \boldsymbol{w}_b \right] _{b\in \mathcal{B}}\in \mathbb{R} ^{2BT}$. The fleet aggregate power flexibility set is the dimensionally consistent Minkowski sum
\begin{equation}
	\eqfitu{
		\begin{aligned}
			\mathcal{U} ^{\mathrm{agg}}(\boldsymbol{w})&:=\bigoplus_{b\in \mathcal{B}}{\mathcal{U} _{b}^{\mathrm{bld}}(\boldsymbol{w}_b)}\\
			&=\left\{ \boldsymbol{u}^{\mathrm{agg}}\, \middle| \,\begin{array}{c}
				\exists\,\{\boldsymbol{u}_{b}^{\mathrm{bld}}\in\mathbb R^T\}_{b\in\mathcal B}:\\[-0.2em]
				\underset{\text{fleet-level aggregation summation}}{\underbrace{\boldsymbol{u}^{\mathrm{agg}}=\sum_{b\in \mathcal{B}}{\boldsymbol{u}_{b}^{\mathrm{bld}}}}},\\
				\underset{\text{building-level spatiotemporal coupling}}{\underbrace{\boldsymbol{u}_{b}^{\mathrm{bld}}\in \mathcal{U} _{b}^{\mathrm{bld}}(\boldsymbol{w}_b),\ \forall b\in\mathcal B}}\\
			\end{array} \right\} \subseteq \mathbb{R} ^T.\\
		\end{aligned}
	}
	\label{eq:Agg_flexibility}
\end{equation}

\begin{remark}[Challenge of direct aggregation]
	Directly characterizing $\mathcal U^{\mathrm{agg}}(\boldsymbol w)$ is challenging for two reasons. First, high-dimensional spatiotemporal coupling makes both the building-level characterization in \eqref{eq:Building_flexibility_full_horizon} and the fleet-level construction in \eqref{eq:Agg_flexibility} computationally demanding. Second, in real-time operation, future system states and exogenous realizations are not yet revealed, so aggregate flexibility must be quantified using only the current thermal state and exogenous realization while preserving a feasible continuation over the remaining horizon for every admissible future exogenous-input sequence.
	\label{remark1:challenge}
\end{remark}

\subsection{Real-Time Coordination Process}\label{subsec:Problem_Coordination}

Fig.~\ref{fig:RealtimeProcess} illustrates the real-time, period-by-period coordination among building HVAC systems, the HVAC aggregator, and the power system operator. At the beginning of each period $t\in\mathcal T$, building $b\in\mathcal B$ observes its current thermal state $\boldsymbol x^{\mathrm{in}}_{b,t-1}$ and the current exogenous realization $\boldsymbol w_{b,t}$. Based on this information, the aggregator reports an admissible aggregate power flexibility set $\widetilde{\mathcal U}_t^{\mathrm{agg}}
(\boldsymbol w_t,\boldsymbol x_{t-1}^{\mathrm{in}})$ (defined in \eqref{eq:Agg_flexibility_decoupled}) to the power system operator for period $t$. The reported set should be recursively feasible: every aggregate power $u_{t}^{\mathrm{agg}}\in \widetilde{\mathcal{U} }_{t}^{\mathrm{agg}}(\boldsymbol{w}_t,\boldsymbol{x}_{t-1}^{\mathrm{in}})$ admits a causal feasible continuation over the remaining horizon under all allowed future exogenous realizations. The power system operator then selects an aggregate HVAC power command $u_t^{\mathrm{reg}}$ from the reported set. 
The aggregator disaggregates $u_t^{\mathrm{reg}}$ into zone-level HVAC power commands $u_{b,i,t}^{\mathrm{ac}}$ that satisfy the HVAC dynamics in \eqref{eq:HVAC_thermal}, the temperature and power constraints in \eqref{eq:HVAC_temp_limit} and \eqref{eq:HVAC_power_limit}, respectively, and the aggregate power-balance requirement
$\sum_{b\in\mathcal B}\sum_{i\in\mathcal I_b}u_{b,i,t}^{\mathrm{ac}}
=u_t^{\mathrm{reg}}$, while preserving a causal feasible continuation under all allowed future exogenous realizations.
The same observation, selection, and disaggregation steps are repeated period by period, and the resulting state is carried into the next period.

\begin{figure}[!t]
	\centering
	\includegraphics[width=0.9\linewidth]{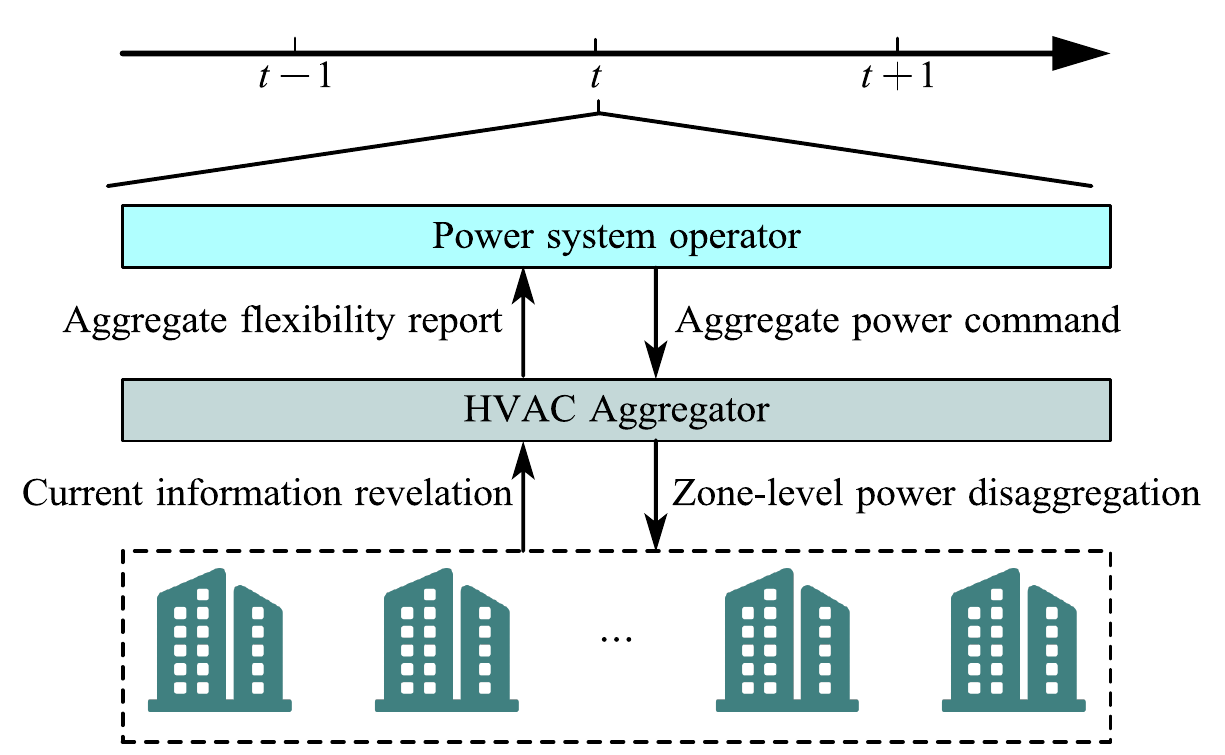}
	\caption{Illustration of the real-time coordination process.}
	\label{fig:RealtimeProcess}
\end{figure}

We therefore aim to develop a time-causal, recursively feasible, and scalable framework for real-time aggregation of multi-zone HVAC fleets under sequential information revelation.

\section{Aggregation Framework}\label{sec:AggregationFramework}
This section develops the proposed offline--online aggregation framework based on reachable-set decomposition. A high-level framework overview is given first, followed by a detailed introduction to each component.

\subsection{Framework Overview}\label{subsec:Framework_Overview}
The framework consists of four connected components:
\begin{enumerate}
	\item A temporal decomposition uses backward reachable sets to replace full-horizon thermal coupling with adjacent-period safe-state conditions, so remaining-horizon feasibility can be equivalently enforced through one-period transitions (Section~\ref{subsec:Framework_Decomposition}).
	\item A tailored affine-image inner approximation converts the multi-zone-coupled reachable-set computation into tractable linear programs, yielding certified safe sets that can be computed efficiently and stored offline (Section~\ref{subsec:Framework_Approximation}).
	\item Using these offline-computed affine sets, a time-causal online policy computes building power intervals in parallel, aggregates them by closed-form Minkowski summation, and disaggregates any admitted aggregate HVAC power command by profile interpolation while preserving recursive feasibility (Section~\ref{subsec:Framework_Aggregation}).
	\item An offline--online algorithm consolidates the preceding components by specifying set construction and certification before the scheduling horizon and period-by-period aggregation and disaggregation during real-time operation (Section~\ref{subsec:Framework_Application}).
\end{enumerate}

\subsection{Reachable-Set Temporal Decomposition}
\label{subsec:Framework_Decomposition}

Although the HVAC model \eqref{eq:HVAC_flexibility_Y} is coupled over the scheduling horizon, its thermal dynamics are Markovian. Physically, the current indoor-temperature vector $\boldsymbol x^{\mathrm{in}}_{b,t-1}$ summarizes the effects of all previous exogenous realizations and HVAC actions on the building's stored thermal energy and therefore quantifies the thermal headroom available for future cooling. Once this state and the current exogenous input $\boldsymbol w_{b,t}$ are known, the earlier history has no additional effect on future feasibility. Remaining-horizon feasibility can therefore be passed backward through adjacent periods as a safe temperature set, rather than reconsidered as a full-horizon trajectory at every real-time decision. This motivates the following backward reachable sets, which encode remaining-horizon feasibility as per-period state constraints and thereby convert multi-period coupling into an equivalent time-causal per-period description.

\begin{definition}[Backward reachable sets \cite{attar2023data}]
	\label{def1:ReachableSet}
	For each $b\in\mathcal B$, define the backward reachable sets
	$\left\{\mathcal X_{b,t}^{\mathrm{rch}}\subseteq\mathbb R^{I_b}\right\}_{t=1,\ldots,T}$.
	Initialize the terminal set using the period-$T$ comfort bounds:
	\begin{equation}
		\adjustbox{max width=\linewidth}{\(\displaystyle
			\mathcal X_{b,T}^{\mathrm{rch}}
			:=\left\{\boldsymbol x_{b,T}^{\mathrm{in}}\,\middle|\,
			x_{b,i,T}^{\mathrm{set}}-\beta_{b,i,T}^{\mathrm{set}}
			\le x_{b,i,T}^{\mathrm{in}}
			\le x_{b,i,T}^{\mathrm{set}}+\beta_{b,i,T}^{\mathrm{set}},\
			\forall i\in\mathcal I_b\right\}.
			\)}
		\label{eq:ReachableSet_Terminal}
	\end{equation}
	For $t=T,T-1,\ldots,2$, define the exact backward recursion by
	\begin{equation}
		\eqfitu{
			\begin{aligned}
				\mathcal X_{b,t-1}^{\mathrm{rch}}
				:=\Bigl\{\boldsymbol x_{b,t-1}^{\mathrm{in}}\ \Bigm|\ 
				&
				x_{b,i,t-1}^{\mathrm{set}}-\beta_{b,i,t-1}^{\mathrm{set}}
				\le x_{b,i,t-1}^{\mathrm{in}},
				\ \forall i\in\mathcal I_b,\\
				&
				x_{b,i,t-1}^{\mathrm{in}}
				\le x_{b,i,t-1}^{\mathrm{set}}+\beta_{b,i,t-1}^{\mathrm{set}},
				\ \forall i\in\mathcal I_b,\\
				&
				\forall\boldsymbol w_{b,t}\in\mathcal W_{b,t},\
				\exists\,\boldsymbol x_{b,t}^{\mathrm{in}},
				\boldsymbol u_{b,t}^{\mathrm{ac}}:\\
				&
				\underset{\mathrm{current}~\mathrm{feasibility}}
				{\underbrace{\eqref{eq:HVAC_thermal_compact},
						\eqref{eq:HVAC_limit_compact}}},
				\
				\underset{\mathrm{recursive}~\mathrm{feasibility}}
				{\underbrace{\boldsymbol x_{b,t}^{\mathrm{in}}
						\in\mathcal X_{b,t}^{\mathrm{rch}}}}
				\Bigr\}.
			\end{aligned}
		}
		\label{eq:ReachableSet_Def}
	\end{equation}
\end{definition}

The terminal box in \eqref{eq:ReachableSet_Terminal} coincides with the period-$T$ comfort constraint in \eqref{eq:HVAC_temp_limit} and therefore introduces no restriction beyond the original HVAC model. At an earlier period, \eqref{eq:ReachableSet_Def} retains exactly those states that satisfy the period-$(t-1)$ comfort bounds and from which, for every possible realization of $\boldsymbol w_{b,t}$, a feasible period-$t$ action can be selected after $\boldsymbol w_{b,t}$ is revealed such that the successor state enters the next reachable set. The explicit period-$(t-1)$ comfort bounds ensure that each reachable set itself contains only physically admissible states. Along a feasible realized trajectory, these bounds are already enforced by the preceding-period constraints. Repeating this one-period handoff preserves feasibility through the terminal period under every admissible future exogenous-input sequence. Formally, we have the following result.

\begin{proposition}[Time-causal feasibility equivalence]
	\label{prop:CausalFeasibility}
	For every $b\in\mathcal B$ and $t\in\{2,\ldots,T\}$,
	$\boldsymbol x_{b,t-1}^{\mathrm{in}}
	\in\mathcal X_{b,t-1}^{\mathrm{rch}}$
	if and only if it satisfies the period-$(t-1)$ comfort bounds and admits a feasible time-causal policy over periods $t$ through $T$. At each period, this policy selects a feasible state--control pair after the current exogenous input is revealed, using only the current state and the revealed exogenous input, and satisfies \eqref{eq:HVAC_thermal_compact} and \eqref{eq:HVAC_limit_compact} for every admissible remaining exogenous-input sequence
	$\left\{\boldsymbol w_{b,\tau}\in\mathcal W_{b,\tau}\right\}_{\tau=t}^{T}$.
\end{proposition}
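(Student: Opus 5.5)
We argue by backward induction on $t$, running from $t=T$ down to $t=2$, and prove both directions at each step. To make the induction work for the terminal period too, we state the property for general $s\in\{1,\ldots,T\}$. Define $P_s$ as the set of states $\boldsymbol x_{b,s}^{\mathrm{in}}$ that satisfy the period-$s$ comfort bounds and admit a feasible time-causal policy over periods $s+1$ through $T$. For $s=T$ the horizon $\{T+1,\ldots,T\}$ is empty, so $P_T$ is just the comfort box. This box coincides with $\mathcal X_{b,T}^{\mathrm{rch}}$ by \eqref{eq:ReachableSet_Terminal}, which is the base case. The proposition is then the claim $\mathcal X_{b,s}^{\mathrm{rch}}=P_s$ for $s=t-1\in\{1,\ldots,T-1\}$. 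The inductive hypothesis is $\mathcal X_{b,t}^{\mathrm{rch}}=P_t$.

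\emph{Sufficiency.} Let $\boldsymbol x_{b,t-1}^{\mathrm{in}}\in\mathcal X_{b,t-1}^{\mathrm{rch}}$. The comfort bounds hold by the first two lines of \eqref{eq:ReachableSet_Def}. For each revealed $\boldsymbol w_{b,t}\in\mathcal W_{b,t}$, the $\forall\exists$ clause of \eqref{eq:ReachableSet_Def} gives a pair $(\boldsymbol x_{b,t}^{\mathrm{in}},\boldsymbol u_{b,t}^{\mathrm{ac}})$ satisfying \eqref{eq:HVAC_thermal_compact} and \eqref{eq:HVAC_limit_compact}, with $\boldsymbol x_{b,t}^{\mathrm{in}}\in\mathcal X_{b,t}^{\mathrm{rch}}=P_t$. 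We define the period-$t$ policy as a selection map $(\boldsymbol x_{b,t-1}^{\mathrm{in}},\boldsymbol w_{b,t})\mapsto(\boldsymbol x_{b,t}^{\mathrm{in}},\boldsymbol u_{b,t}^{\mathrm{ac}})$ from this nonempty set. Such a map depends only on the current state and the revealed input, so it is time-causal. We then concatenate it with the policy over periods $t+1,\ldots,T$ guaranteed by $P_t$, started from the resulting state. The concatenation is feasible for every admissible sequence $\{\boldsymbol w_{b,\tau}\}_{\tau=t}^{T}$. Measurability or regularity of the selection is not needed, since the statement only requires existence pointwise in the realizations. If desired, one can pick, for example, the lexicographically minimal solution of the LP.

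\emph{Necessity.} Suppose $\boldsymbol x_{b,t-1}^{\mathrm{in}}$ satisfies the period-$(t-1)$ comfort bounds and admits such a policy. Fix any $\boldsymbol w_{b,t}\in\mathcal W_{b,t}$. The policy's period-$t$ output satisfies \eqref{eq:HVAC_thermal_compact} and \eqref{eq:HVAC_limit_compact}; in particular, the successor state meets the period-$t$ comfort bounds. Restricting the policy to periods $t+1,\ldots,T$ gives a time-causal policy from that successor state that is feasible for all remaining sequences. It remains time-causal because it uses only current states and revealed inputs. Hence the successor state lies in $P_t=\mathcal X_{b,t}^{\mathrm{rch}}$. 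Since $\boldsymbol w_{b,t}$ was arbitrary, the $\forall\boldsymbol w_{b,t}\,\exists$ clause of \eqref{eq:ReachableSet_Def} holds, and so $\boldsymbol x_{b,t-1}^{\mathrm{in}}\in\mathcal X_{b,t-1}^{\mathrm{rch}}$.

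The step needing the most care is the bookkeeping of quantifier order in the necessity direction. A time-causal policy chooses the period-$t$ action after seeing $\boldsymbol w_{b,t}$ but before seeing later inputs. The tail policy must therefore be feasible uniformly over all later sequences, not merely for some fixed future realization. We would make this explicit by formalizing a policy as a sequence of maps $\pi_\tau(\boldsymbol x_{b,\tau-1}^{\mathrm{in}},\boldsymbol w_{b,\tau})$. Under this formalization, restriction and concatenation are both immediate, and the Markov structure of \eqref{eq:HVAC_thermal} ensures that dependence on the current state alone loses no generality.
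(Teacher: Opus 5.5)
Your proposal is correct and is essentially the paper's backward induction: concatenation for the forward direction, restriction of the given policy to periods $t+1$ through $T$ for the reverse direction, with the only cosmetic difference that you anchor the induction at the trivial case $\mathcal X_{b,T}^{\mathrm{rch}}=P_T$ rather than at $\mathcal X_{b,T-1}^{\mathrm{rch}}$. One small caveat, which the paper's proof also glosses over: under your formalization $\pi_\tau(\boldsymbol x_{b,\tau-1}^{\mathrm{in}},\boldsymbol w_{b,\tau})$, concatenation is not quite immediate, because the glued tail policy at periods $\tau\ge t+2$ can depend on the period-$t$ state, and the fix is to define every $\pi_\tau$ directly as a selection whose successor lies in $\mathcal X_{b,\tau}^{\mathrm{rch}}$ (equivalently, to strengthen the induction hypothesis to a single tail policy valid on all of $\mathcal X_{b,t}^{\mathrm{rch}}$).
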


\begin{proof}
	The recursion has a direct interpretation: at the beginning of period $t$, a state is safe if it satisfies the period-$(t-1)$ comfort bounds and, after any current $\boldsymbol w_{b,t}$ is revealed, a feasible period-$t$ action can be selected whose successor lies in $\mathcal X_{b,t}^{\mathrm{rch}}$. We formalize this interpretation by backward induction.
	
	At $t=T$, membership in
	$\mathcal X_{b,T-1}^{\mathrm{rch}}$ means that the current state satisfies the period-$(T-1)$ comfort bounds and, for every revealed $\boldsymbol w_{b,T}$, a feasible period-$T$ action can be selected whose successor lies in the terminal comfort box
	$\mathcal X_{b,T}^{\mathrm{rch}}$ defined in
	\eqref{eq:ReachableSet_Terminal}. Since no periods remain afterward, this is exactly the existence of a feasible time-causal policy for the final period.
	
	Now fix $t<T$ and assume that
	$\boldsymbol x_{b,t}^{\mathrm{in}}\in\mathcal X_{b,t}^{\mathrm{rch}}$
	if and only if it satisfies the period-$t$ comfort bounds and admits a feasible time-causal policy over periods $t+1$ through $T$.
	
	\emph{Forward direction.}
	If
	$\boldsymbol x_{b,t-1}^{\mathrm{in}}
	\in\mathcal X_{b,t-1}^{\mathrm{rch}}$,
	\eqref{eq:ReachableSet_Def} ensures that the current state satisfies the period-$(t-1)$ comfort bounds and, for every $\boldsymbol w_{b,t}$, a feasible period-$t$ action can be selected whose successor lies in
	$\mathcal X_{b,t}^{\mathrm{rch}}$. The induction hypothesis supplies a feasible time-causal tail policy from that successor over periods $t+1$ through $T$. Since the current action is selected after $\boldsymbol w_{b,t}$ is revealed, combining it with the tail policy yields a feasible time-causal policy over periods $t$ through $T$.
	
	\emph{Reverse direction.}
	Conversely, suppose that
	$\boldsymbol x_{b,t-1}^{\mathrm{in}}$
	satisfies the period-$(t-1)$ comfort bounds and admits a feasible time-causal policy over periods $t$ through $T$. For every current realization $\boldsymbol w_{b,t}$, its period-$t$ action satisfies the current constraints, and its tail policy is feasible from the resulting successor over periods $t+1$ through $T$. The induction hypothesis therefore implies that the successor lies in
	$\mathcal X_{b,t}^{\mathrm{rch}}$. Hence, all conditions in
	\eqref{eq:ReachableSet_Def} hold, proving
	$\boldsymbol x_{b,t-1}^{\mathrm{in}}
	\in\mathcal X_{b,t-1}^{\mathrm{rch}}$.
\end{proof}

Proposition~\ref{prop:CausalFeasibility} shows that, at each period $t$, the single-period constraints \eqref{eq:HVAC_thermal_compact} and \eqref{eq:HVAC_limit_compact} enforce current-period feasibility, while imposing
$\boldsymbol x_{b,t}^{\mathrm{in}}\in\mathcal X_{b,t}^{\mathrm{rch}}$
ensures a feasible time-causal continuation through period $T$. The current-period decision can therefore use only the current state and revealed exogenous input. The reachable set plays a role analogous to that of a value function in dynamic programming: both summarize future consequences to enable per-period decisions, but the reachable set encodes feasibility-to-go for aggregation rather than optimal value-to-go. Accordingly, the proposed reachable-set decomposition can be viewed as a feasibility-oriented counterpart to Bellman's recursion: they recursively characterize feasibility and optimality, respectively, in sequential decision problems under uncertainty. This distinction also separates the proposed framework from approximate dynamic programming. The latter approximates the value function to pursue near-optimal cost, whereas the proposed inner approximation is constructed to lie within the exact reachable set. The resulting flexibility description may therefore be conservative, but its feasibility guarantee is certified rather than approximate.

\subsection{Tractable Inner Approximation}\label{subsec:Framework_Approximation}
Exact evaluation of \eqref{eq:ReachableSet_Def} is computationally
challenging because its robust quantifier structure requires feasible
next-state and control variables for every exogenous realization,
while the predecessor set is obtained by projecting out these
realization-dependent variables. The dimension of this projection
grows with the number of thermally coupled zones, so explicit
projection methods such as Fourier--Motzkin elimination become
increasingly expensive as the model scales \cite{tan2023non}.
We therefore develop an efficient polytopic inner-approximation
method tailored to multi-zone HVAC systems. The resulting inner
approximation preserves the recursive feasibility required by the
online policy, while its linear representation enables tractable
offline computation via
\eqref{eq:ReachableSet_Problem_Reformulate} and online optimization
via \eqref{eq:Building_Problem_Interval}.

\subsubsection{Affine-Image Inner Approximation}

For building $b$, let the unit $\ell_\infty$ ball serve as the base polytope:
\begin{equation}
	\mathcal X_b^{\mathrm{base}}
	:=
	\left\{\boldsymbol z\,\middle|\,
	\|\boldsymbol z\|_\infty\le1\right\}
	=
	\left\{\boldsymbol z\,\middle|\,
	\boldsymbol H_b^{\mathrm{base}}\boldsymbol z
	\le\boldsymbol h_b^{\mathrm{base}}\right\}\subseteq\mathbb R^{I_b},
	\label{eq:ReachableSet_BaseSet}
\end{equation}
where
$\boldsymbol H_b^{\mathrm{base}}\in\mathbb R^{2I_b\times I_b}$
and $\boldsymbol h_b^{\mathrm{base}}\in\mathbb R^{2I_b}$
give its H-representation. We parameterize the approximate
reachable set at period $t$ as the affine image
\begin{equation}
	\begin{aligned}
		\mathcal X_{b,t}^{\mathrm{aff}}
		&:=
		\left\{
		\boldsymbol\Gamma_{b,t}^{\mathrm{aff}}\boldsymbol z
		+\boldsymbol\gamma_{b,t}^{\mathrm{aff}}
		\,\middle|\,
		\boldsymbol z\in\mathcal X_b^{\mathrm{base}}
		\right\} \\
		&=
		\left\{
		\boldsymbol\Gamma_{b,t}^{\mathrm{aff}}\boldsymbol z
		+\boldsymbol\gamma_{b,t}^{\mathrm{aff}}
		\,\middle|\,
		\boldsymbol H_b^{\mathrm{base}}\boldsymbol z
		\le\boldsymbol h_b^{\mathrm{base}}
		\right\}
		\subseteq\mathbb R^{I_b}.
	\end{aligned}
	\label{eq:ReachableSet_AffineBox}
\end{equation}
Here, $\boldsymbol z$ is a normalized base coordinate, $\boldsymbol\gamma_{b,t}^{\mathrm{aff}}\in\mathbb R^{I_b}$ is the translation vector locating the set in temperature space, and $\boldsymbol\Gamma_{b,t}^{\mathrm{aff}}\in\mathbb R^{I_b\times I_b}$ determines the generating directions and their scales. Its off-diagonal entries allow the affine image to represent non-axis-aligned geometry associated with cross-zone thermal coupling. Note that invertibility of $\boldsymbol{\Gamma}_{b,t}^{\mathrm{aff}}$ is not required. Set membership can be imposed through linear constraints using the auxiliary coordinate $\boldsymbol z$, even when the matrix is singular.

Let
$\boldsymbol x_{b,T}^{\mathrm{set}}
:=[x_{b,i,T}^{\mathrm{set}}]_{i\in\mathcal I_b}$
and
$\boldsymbol\beta_{b,T}^{\mathrm{set}}
:=[\beta_{b,i,T}^{\mathrm{set}}]_{i\in\mathcal I_b}$
denote the center and half-width vectors of the terminal set,
respectively. The backward recursion is initialized by
\begin{equation}
	\boldsymbol\Gamma_{b,T}^{\mathrm{aff}}
	:=\operatorname{Diag}
	(\boldsymbol\beta_{b,T}^{\mathrm{set}}),
	\qquad
	\boldsymbol\gamma_{b,T}^{\mathrm{aff}}
	:=\boldsymbol x_{b,T}^{\mathrm{set}},
	\qquad
	\mathcal X_{b,T}^{\mathrm{aff}}
	=\mathcal X_{b,T}^{\mathrm{rch}}.
	\label{eq:ReachableSet_AffineTerminal}
\end{equation}

For $t=T,T-1,\ldots,2$, given the next-period approximation
$\mathcal X_{b,t}^{\mathrm{aff}}$, define its robust one-step
predecessor as
\begin{equation}
	\adjustbox{max width=\linewidth}{\(\displaystyle
		\begin{aligned}
			\mathcal X_{b,t-1}^{\mathrm{pre}}
			:=
			\Bigl\{\boldsymbol x_{b,t-1}^{\mathrm{in}}\ \Bigm|\
			&
			x_{b,i,t-1}^{\mathrm{set}}
			-\beta_{b,i,t-1}^{\mathrm{set}}
			\le x_{b,i,t-1}^{\mathrm{in}},
			\ \forall i\in\mathcal I_b,\\[-0.2em]
			&
			x_{b,i,t-1}^{\mathrm{in}}
			\le x_{b,i,t-1}^{\mathrm{set}}
			+\beta_{b,i,t-1}^{\mathrm{set}},
			\ \forall i\in\mathcal I_b,\\[-0.2em]
			&
			\forall\boldsymbol w_{b,t}\in\mathcal W_{b,t},\
			\exists\,\boldsymbol x_{b,t}^{\mathrm{in}},
			\boldsymbol u_{b,t}^{\mathrm{ac}}:\\[-0.2em]
			&
			\eqref{eq:HVAC_thermal_compact},\
			\eqref{eq:HVAC_limit_compact},\quad
			\boldsymbol x_{b,t}^{\mathrm{in}}
			\in\mathcal X_{b,t}^{\mathrm{aff}}
			\Bigr\}
			\subseteq\mathbb R^{I_b}.
		\end{aligned}\)}
	\label{eq:ReachableSet_Predecessor}
\end{equation}
The backward step constructs an affine image satisfying
\begin{equation}
	\mathcal X_{b,t-1}^{\mathrm{aff}}
	\subseteq
	\mathcal X_{b,t-1}^{\mathrm{pre}}
	\subseteq
	\mathcal X_{b,t-1}^{\mathrm{rch}}.
	\label{eq:ReachableSet_Chain}
\end{equation}
The first inclusion in \eqref{eq:ReachableSet_Chain} is enforced by
the linear containment certificate in
Proposition~\ref{prop:SetContain}, which is imposed in
\eqref{eq:ReachableSet_Problem_Reformulate}. To establish the second
inclusion, suppose that
$\mathcal X_{b,t}^{\mathrm{aff}}
\subseteq\mathcal X_{b,t}^{\mathrm{rch}}$.
By \eqref{eq:ReachableSet_Predecessor}, every state in
$\mathcal X_{b,t-1}^{\mathrm{pre}}$ satisfies the period-$(t-1)$
comfort bounds and admits, for every admissible exogenous realization,
a feasible transition into $\mathcal X_{b,t}^{\mathrm{aff}}$ and hence
into $\mathcal X_{b,t}^{\mathrm{rch}}$. It therefore belongs to
$\mathcal X_{b,t-1}^{\mathrm{rch}}$ by
\eqref{eq:ReachableSet_Def}. Since
\eqref{eq:ReachableSet_AffineTerminal} provides the base case,
\eqref{eq:ReachableSet_Chain} follows by backward induction.

Physically, this nested construction provides a certified
thermal-headroom envelope at each period: keeping the resulting
temperature state within $\mathcal X_{b,t}^{\mathrm{aff}}$ retains
sufficient HVAC actuation capability to accommodate every admissible
future realization of outdoor temperature and heat gain.

To reduce conservatism, we seek a large
$\mathcal X_{b,t-1}^{\mathrm{aff}}$. Its volume satisfies
\begin{equation}
	\operatorname{vol}
	\bigl(\mathcal X_{b,t-1}^{\mathrm{aff}}\bigr)
	=
	\operatorname{vol}
	\bigl(\mathcal X_b^{\mathrm{base}}\bigr)
	\left|
	\det\bigl(\boldsymbol\Gamma_{b,t-1}^{\mathrm{aff}}\bigr)
	\right|.
	\label{eq:ReachableSet_Volume}
\end{equation}
Because maximizing the determinant magnitude is nonconvex,
we use
$\operatorname{Tr}
(\boldsymbol\Gamma_{b,t-1}^{\mathrm{aff}})$,
the sum of its diagonal entries, as a tractable linear size heuristic
\cite{liu2024continuous}:
\begin{equation}
	\eqfitu{
		\begin{aligned}
			\underset{
				\boldsymbol\Gamma_{b,t-1}^{\mathrm{aff}},
				\boldsymbol\gamma_{b,t-1}^{\mathrm{aff}}
			}{\max}\quad
			&\operatorname{Tr}
			\left(\boldsymbol\Gamma_{b,t-1}^{\mathrm{aff}}\right)\\
			\mathrm{s.t.}\quad
			&\mathcal X_{b,t-1}^{\mathrm{aff}}
			\subseteq
			\mathcal X_{b,t-1}^{\mathrm{pre}}.
		\end{aligned}
	}
	\label{eq:ReachableSet_Problem_Original}
\end{equation}
Problem \eqref{eq:ReachableSet_Problem_Original} is not directly
tractable because
$\mathcal X_{b,t-1}^{\mathrm{pre}}$ is defined implicitly through
robust-feasibility quantifiers and projection in \eqref{eq:ReachableSet_Predecessor}. We next
derive a lifted vertex representation and a tractable linear
certificate for the containment condition.

\subsubsection{Lifted Vertex Construction}

For a fixed realization
$\boldsymbol w_{b,t}\in\mathcal W_{b,t}$, define the
scenario-wise feasible polytope
\begin{equation}
	\begin{aligned}
		\mathcal Y_{b,t-1}^{\mathrm{sc}}(\boldsymbol w_{b,t})
		:=\Bigl\{&
		\bigl[
		(\boldsymbol x_{b,t-1}^{\mathrm{in}})^\top,
		\boldsymbol z^\top,
		(\boldsymbol u_{b,t}^{\mathrm{ac}})^\top
		\bigr]^\top
		\ \Bigm|\\[-0.2em]
		&
		x_{b,i,t-1}^{\mathrm{set}}
		-\beta_{b,i,t-1}^{\mathrm{set}}
		\le x_{b,i,t-1}^{\mathrm{in}},
		\quad\forall i\in\mathcal I_b,\\[-0.2em]
		&
		x_{b,i,t-1}^{\mathrm{in}}
		\le x_{b,i,t-1}^{\mathrm{set}}
		+\beta_{b,i,t-1}^{\mathrm{set}},
		\quad\forall i\in\mathcal I_b,\\[-0.2em]
		&
		\boldsymbol z\in\mathcal X_b^{\mathrm{base}},\
		\eqref{eq:HVAC_thermal_compact},\
		\eqref{eq:HVAC_limit_compact}
		\text{ hold with}\\[-0.2em]
		&
		\boldsymbol x_{b,t}^{\mathrm{in}}
		=
		\boldsymbol\Gamma_{b,t}^{\mathrm{aff}}\boldsymbol z
		+\boldsymbol\gamma_{b,t}^{\mathrm{aff}}
		\in\mathcal X_{b,t}^{\mathrm{aff}}
		\Bigr\}
		\subseteq\mathbb R^{3I_b}.
	\end{aligned}
	\label{eq:ReachableSet_Scenario}
\end{equation}
Here, for a given predecessor temperature state
$\boldsymbol x_{b,t-1}^{\mathrm{in}}$ and exogenous-input realization
$\boldsymbol w_{b,t}$, any pair
$(\boldsymbol z,\boldsymbol u_{b,t}^{\mathrm{ac}})$ satisfying the
constraints defining
$\mathcal Y_{b,t-1}^{\mathrm{sc}}(\boldsymbol w_{b,t})$
represents a feasible response. It specifies a feasible successor
temperature state through
$\boldsymbol x_{b,t}^{\mathrm{in}}
=
\boldsymbol\Gamma_{b,t}^{\mathrm{aff}}\boldsymbol z
+
\boldsymbol\gamma_{b,t}^{\mathrm{aff}}$,
together with the corresponding HVAC control. Projecting
$\mathcal Y_{b,t-1}^{\mathrm{sc}}(\boldsymbol w_{b,t})$
onto $\boldsymbol x_{b,t-1}^{\mathrm{in}}$ eliminates these
realization-dependent response variables. Hence,
\begin{equation}
	\mathcal X_{b,t-1}^{\mathrm{pre}}
	=
	\bigcap_{\boldsymbol w_{b,t}\in\mathcal W_{b,t}}
	\operatorname{Proj}_{\boldsymbol x_{b,t-1}^{\mathrm{in}}}
	\left(
	\mathcal Y_{b,t-1}^{\mathrm{sc}}(\boldsymbol w_{b,t})
	\right).
	\label{eq:ReachableSet_Proj_Scenario}
\end{equation}

Let
$V_{b,t}:=|\operatorname{Vert}(\mathcal W_{b,t})|$,
$\mathcal V_{b,t}:=\{1,\ldots,V_{b,t}\}$, and denote the
vertices of $\mathcal W_{b,t}$ by
$\boldsymbol w_{b,t,v}$, $v\in\mathcal V_{b,t}$.
Since $\mathcal W_{b,t}$ is a compact polytope and the constraints
defining
$\mathcal Y_{b,t-1}^{\mathrm{sc}}(\boldsymbol w_{b,t})$
are affine in both the exogenous-input realization and the continuous
response variables, it suffices to enforce feasibility at the vertices
of $\mathcal W_{b,t}$, with a separate feasible response for each
vertex. Physically, these vertices represent the extreme combinations
of outdoor temperature and heat gain. Requiring a feasible response
at every vertex therefore protects the building against the entire
prescribed uncertainty range. Because the response is selected after
the current realization is revealed and all relevant constraints are
affine, convex interpolation of the vertex-specific responses yields
a feasible response for every intermediate realization in the
uncertainty set \cite{attar2023data,yin2019finite}. Accordingly,
\eqref{eq:ReachableSet_Proj_Scenario} reduces to the finite intersection
\begin{equation}
	\mathcal X_{b,t-1}^{\mathrm{pre}}
	=
	\bigcap_{v\in\mathcal V_{b,t}}
	\operatorname{Proj}_{\boldsymbol x_{b,t-1}^{\mathrm{in}}}
	\left(
	\mathcal Y_{b,t-1}^{\mathrm{sc}}
	(\boldsymbol w_{b,t,v})
	\right).
	\label{eq:ReachableSet_Proj_Vertex}
\end{equation}

To avoid explicit projection, stack the shared predecessor state
and the vertex-specific variables as
\begin{equation}
	\begin{aligned}
		\boldsymbol\xi_{b,t-1}^{\mathrm{vrt}}
		:=\Bigl[&
		(\boldsymbol x_{b,t-1}^{\mathrm{in}})^\top,
		\boldsymbol z_{b,t,1}^\top,
		(\boldsymbol u_{b,t,1}^{\mathrm{ac}})^\top,\ldots,\\[-0.2em]
		&\boldsymbol z_{b,t,V_{b,t}}^\top,
		(\boldsymbol u_{b,t,V_{b,t}}^{\mathrm{ac}})^\top
		\Bigr]^\top
		\in\mathbb R^{d_{b,t}},
	\end{aligned}
	\label{eq:ReachableSet_LiftedVector}
\end{equation}
where $d_{b,t}:=I_b(2V_{b,t}+1)$. The corresponding lifted
polytope is
\begin{equation}
	\begin{aligned}
		\mathcal Y_{b,t-1}^{\mathrm{vrt}}
		:=\Bigl\{
		\boldsymbol\xi_{b,t-1}^{\mathrm{vrt}}
		\ \Bigm|\ &
		\bigl[
		(\boldsymbol x_{b,t-1}^{\mathrm{in}})^\top,
		\boldsymbol z_{b,t,v}^\top,
		(\boldsymbol u_{b,t,v}^{\mathrm{ac}})^\top
		\bigr]^\top\\[-0.2em]
		&\in
		\mathcal Y_{b,t-1}^{\mathrm{sc}}
		(\boldsymbol w_{b,t,v}),
		\quad
		\forall v\in\mathcal V_{b,t}
		\Bigr\}\subseteq\mathbb R^{d_{b,t}}.
	\end{aligned}
	\label{eq:ReachableSet_LiftedSet}
\end{equation}
The lifted vector $\boldsymbol\xi_{b,t-1}^{\mathrm{vrt}}$ stacks
the shared predecessor temperature state and the vertex-specific
response variables
$(\boldsymbol z_{b,t,v},\boldsymbol u_{b,t,v}^{\mathrm{ac}})$
for all $v\in\mathcal V_{b,t}$. Any point satisfying
$\boldsymbol\xi_{b,t-1}^{\mathrm{vrt}}
\in\mathcal Y_{b,t-1}^{\mathrm{vrt}}$
therefore certifies that the shared predecessor temperature state
admits a feasible response at every vertex of the exogenous-input
set. Because all defining constraints are linear, this lifted
polytope admits an H-representation
\begin{equation}
	\mathcal Y_{b,t-1}^{\mathrm{vrt}}
	=
	\left\{
	\boldsymbol\xi_{b,t-1}^{\mathrm{vrt}}
	\in\mathbb R^{d_{b,t}}
	\ \middle|\
	\boldsymbol H_{b,t-1}^{\mathrm{vrt}}
	\boldsymbol\xi_{b,t-1}^{\mathrm{vrt}}
	\le
	\boldsymbol h_{b,t-1}^{\mathrm{vrt}}
	\right\}.
	\label{eq:ReachableSet_High_Vertex}
\end{equation}
Here,
$\boldsymbol H_{b,t-1}^{\mathrm{vrt}}
\in\mathbb R^{m_{b,t-1}^{\mathrm{vrt}}\times d_{b,t}}$
and
$\boldsymbol h_{b,t-1}^{\mathrm{vrt}}
\in\mathbb R^{m_{b,t-1}^{\mathrm{vrt}}}$
collect coefficients that are fixed at the current backward step
after substituting the already computed next-period affine set.
The quantity $m_{b,t-1}^{\mathrm{vrt}}$ is the number of lifted
inequalities, with equalities represented by paired inequalities.
Define the projection matrix
\begin{equation}
	\boldsymbol\Gamma_{b,t-1}^{\mathrm{proj}}
	:=
	\left[
	\boldsymbol I_{I_b},
	\boldsymbol 0_{I_b\times 2I_bV_{b,t}}
	\right],
\end{equation}
which extracts the shared predecessor state from the lifted
vector. The predecessor set is then exactly
\begin{equation}
	\mathcal X_{b,t-1}^{\mathrm{pre}}
	=
	\operatorname{Proj}_{\boldsymbol x_{b,t-1}^{\mathrm{in}}}
	\left(\mathcal Y_{b,t-1}^{\mathrm{vrt}}\right)
	=
	\boldsymbol\Gamma_{b,t-1}^{\mathrm{proj}}
	\mathcal Y_{b,t-1}^{\mathrm{vrt}}.
	\label{eq:ReachableSet_Image}
\end{equation}
This lifted representation forms the basis for the linear
containment certificate below.

\subsubsection{Linear Containment Certificate}
The affine set in \eqref{eq:ReachableSet_AffineBox} is parameterized
by a low-dimensional base coordinate, whereas
\eqref{eq:ReachableSet_Image} represents the predecessor set as the
projection of a lifted polytope. Proposition~\ref{prop:SetContain}
introduces an affine reconstruction that maps the base coordinate
associated with each candidate state in
$\mathcal X_{b,t-1}^{\mathrm{aff}}$ to a feasible point in
$\mathcal Y_{b,t-1}^{\mathrm{vrt}}$ whose projection recovers that
candidate state. The set containment can therefore be certified
without explicitly computing the high-dimensional projection.
\begin{proposition}[Linear containment certificate]\label{prop:SetContain}
	The inclusion $\mathcal X_{b,t-1}^{\mathrm{aff}}\subseteq\mathcal X_{b,t-1}^{\mathrm{pre}}$ holds if there exist $\boldsymbol G_{b,t-1}^{\mathrm{aux}}\in\mathbb R^{d_{b,t}\times I_b}$, $\boldsymbol\beta_{b,t-1}^{\mathrm{aux}}\in\mathbb R^{d_{b,t}}$, and $\boldsymbol\Lambda_{b,t-1}^{\mathrm{aux}}\in\mathbb R_+^{m_{b,t-1}^{\mathrm{vrt}}\times2I_b}$ such that
	\begin{equation}
		\begin{aligned}
			\boldsymbol\Gamma_{b,t-1}^{\mathrm{aff}}
			&=\boldsymbol\Gamma_{b,t-1}^{\mathrm{proj}}
			\boldsymbol G_{b,t-1}^{\mathrm{aux}},\\
			-\boldsymbol\gamma_{b,t-1}^{\mathrm{aff}}
			&=\boldsymbol\Gamma_{b,t-1}^{\mathrm{proj}}
			\boldsymbol\beta_{b,t-1}^{\mathrm{aux}},\\
			\boldsymbol\Lambda_{b,t-1}^{\mathrm{aux}}\boldsymbol H_b^{\mathrm{base}}
			&=\boldsymbol H_{b,t-1}^{\mathrm{vrt}}
			\boldsymbol G_{b,t-1}^{\mathrm{aux}},\\
			\boldsymbol\Lambda_{b,t-1}^{\mathrm{aux}}\boldsymbol h_b^{\mathrm{base}}
			&\le\boldsymbol h_{b,t-1}^{\mathrm{vrt}}
			+\boldsymbol H_{b,t-1}^{\mathrm{vrt}}
			\boldsymbol\beta_{b,t-1}^{\mathrm{aux}}.
		\end{aligned}
		\label{eq:ReachableSet_Inclusion_Cond}
	\end{equation}
\end{proposition}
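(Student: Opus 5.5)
The plan is to show that every point of $\mathcal X_{b,t-1}^{\mathrm{aff}}$ has a lifted preimage in $\mathcal Y_{b,t-1}^{\mathrm{vrt}}$. Fix an arbitrary $\boldsymbol x\in\mathcal X_{b,t-1}^{\mathrm{aff}}$. By \eqref{eq:ReachableSet_AffineBox} it can be written as $\boldsymbol x=\boldsymbol\Gamma_{b,t-1}^{\mathrm{aff}}\boldsymbol z+\boldsymbol\gamma_{b,t-1}^{\mathrm{aff}}$ for some $\boldsymbol z$ with $\boldsymbol H_b^{\mathrm{base}}\boldsymbol z\le\boldsymbol h_b^{\mathrm{base}}$. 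Using the multipliers in \eqref{eq:ReachableSet_Inclusion_Cond}, define the candidate lifted point
\[
\boldsymbol\xi:=\boldsymbol G_{b,t-1}^{\mathrm{aux}}\boldsymbol z-\boldsymbol\beta_{b,t-1}^{\mathrm{aux}}.
\]
The minus sign on the translation matches the sign convention $-\boldsymbol\gamma=\boldsymbol\Gamma^{\mathrm{proj}}\boldsymbol\beta$ in the second condition.

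First, I will check that $\boldsymbol\xi$ projects back onto $\boldsymbol x$. By the first two equalities,
\[
\boldsymbol\Gamma_{b,t-1}^{\mathrm{proj}}\boldsymbol\xi=\boldsymbol\Gamma_{b,t-1}^{\mathrm{aff}}\boldsymbol z+\boldsymbol\gamma_{b,t-1}^{\mathrm{aff}}=\boldsymbol x.
\]

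Second, I will check that $\boldsymbol\xi\in\mathcal Y_{b,t-1}^{\mathrm{vrt}}$, which is the Farkas/duality-type step. Using the third and fourth conditions together with $\boldsymbol\Lambda^{\mathrm{aux}}\ge0$,
\[
\boldsymbol H^{\mathrm{vrt}}\boldsymbol\xi=\boldsymbol\Lambda^{\mathrm{aux}}\boldsymbol H_b^{\mathrm{base}}\boldsymbol z-\boldsymbol H^{\mathrm{vrt}}\boldsymbol\beta^{\mathrm{aux}}\le\boldsymbol\Lambda^{\mathrm{aux}}\boldsymbol h_b^{\mathrm{base}}-\boldsymbol H^{\mathrm{vrt}}\boldsymbol\beta^{\mathrm{aux}}\le\boldsymbol h^{\mathrm{vrt}}.
\]
The first inequality multiplies $\boldsymbol H_b^{\mathrm{base}}\boldsymbol z\le\boldsymbol h_b^{\mathrm{base}}$ row-wise by the nonnegative rows of $\boldsymbol\Lambda^{\mathrm{aux}}$. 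Hence $\boldsymbol\xi$ satisfies the H-representation \eqref{eq:ReachableSet_High_Vertex}.

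Finally, by \eqref{eq:ReachableSet_Image}, $\boldsymbol x=\boldsymbol\Gamma^{\mathrm{proj}}\boldsymbol\xi\in\mathcal X_{b,t-1}^{\mathrm{pre}}$, and since $\boldsymbol x$ was arbitrary the inclusion follows. This step relies on \eqref{eq:ReachableSet_Image} being exact, which in turn rests on the vertex reduction \eqref{eq:ReachableSet_Proj_Vertex}. I take that reduction as established earlier: convex combinations of the vertex-wise responses remain feasible because the constraints are jointly affine in the realization and the response.

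The main point of care is the sign convention. With the fourth condition written as $\le\boldsymbol h^{\mathrm{vrt}}+\boldsymbol H^{\mathrm{vrt}}\boldsymbol\beta^{\mathrm{aux}}$, the reconstruction must be $\boldsymbol\xi=\boldsymbol G\boldsymbol z-\boldsymbol\beta$ rather than $+\boldsymbol\beta$, and this choice is exactly what makes the first two conditions produce $+\boldsymbol\gamma$. The equalities in $\mathcal Y^{\mathrm{vrt}}$ are already encoded as paired inequalities, so no separate handling is needed.
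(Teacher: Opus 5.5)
Your proposal is correct and follows the paper's proof essentially step for step. It uses the same lifted point $\boldsymbol G^{\mathrm{aux}}\boldsymbol\zeta-\boldsymbol\beta^{\mathrm{aux}}$, the first two conditions to recover $\boldsymbol x$ under $\boldsymbol\Gamma^{\mathrm{proj}}$, and the nonnegative-multiplier chain from the last two conditions to show membership in $\mathcal Y^{\mathrm{vrt}}$, concluding via \eqref{eq:ReachableSet_Image} exactly as the paper does.
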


\begin{proof}
	The conditions in \eqref{eq:ReachableSet_Inclusion_Cond}
	are obtained by specializing
	\cite[Theorem~1]{sadraddini2019linear} to the present sets.
	For a direct verification, suppress the subscripts $b,t-1$
	and let
	$\boldsymbol x\in\mathcal X^{\mathrm{aff}}$ be arbitrary.
	By \eqref{eq:ReachableSet_AffineBox}, there exists
	$\boldsymbol\zeta\in\mathcal X^{\mathrm{base}}$ such that
	\begin{equation}
	\boldsymbol x
	=
	\boldsymbol\Gamma^{\mathrm{aff}}\boldsymbol\zeta
	+
	\boldsymbol\gamma^{\mathrm{aff}}.
	\end{equation}
	Define the corresponding lifted point by
	\begin{equation}
		\boldsymbol y
		:=
		\boldsymbol G^{\mathrm{aux}}\boldsymbol\zeta
		-
		\boldsymbol\beta^{\mathrm{aux}}.
	\end{equation}
	Since
	$\boldsymbol H^{\mathrm{base}}\boldsymbol\zeta
	\leq\boldsymbol h^{\mathrm{base}}$
	and
	$\boldsymbol\Lambda^{\mathrm{aux}}\geq\boldsymbol 0$,
	the last two conditions in
	\eqref{eq:ReachableSet_Inclusion_Cond} give
	\begin{equation}
		\begin{aligned}
			\boldsymbol H^{\mathrm{vrt}}\boldsymbol y
			&=
			\boldsymbol H^{\mathrm{vrt}}
			\boldsymbol G^{\mathrm{aux}}\boldsymbol\zeta
			-
			\boldsymbol H^{\mathrm{vrt}}
			\boldsymbol\beta^{\mathrm{aux}}\\
			&=
			\boldsymbol\Lambda^{\mathrm{aux}}
			\boldsymbol H^{\mathrm{base}}\boldsymbol\zeta
			-
			\boldsymbol H^{\mathrm{vrt}}
			\boldsymbol\beta^{\mathrm{aux}}\\
			&\leq
			\boldsymbol\Lambda^{\mathrm{aux}}
			\boldsymbol h^{\mathrm{base}}
			-
			\boldsymbol H^{\mathrm{vrt}}
			\boldsymbol\beta^{\mathrm{aux}}\\
			&\leq
			\boldsymbol h^{\mathrm{vrt}}.
		\end{aligned}
	\end{equation}
	Thus,
	$\boldsymbol y\in\mathcal Y^{\mathrm{vrt}}$.
	The first two conditions in
	\eqref{eq:ReachableSet_Inclusion_Cond} also give
	\begin{equation}
		\begin{aligned}
			\boldsymbol\Gamma^{\mathrm{proj}}\boldsymbol y
			&=
			\boldsymbol\Gamma^{\mathrm{proj}}
			\boldsymbol G^{\mathrm{aux}}\boldsymbol\zeta
			-
			\boldsymbol\Gamma^{\mathrm{proj}}
			\boldsymbol\beta^{\mathrm{aux}}\\
			&=
			\boldsymbol\Gamma^{\mathrm{aff}}\boldsymbol\zeta
			+
			\boldsymbol\gamma^{\mathrm{aff}}
			=
			\boldsymbol x.
		\end{aligned}
	\end{equation}
	Therefore,
	$\boldsymbol x
	\in
	\boldsymbol\Gamma^{\mathrm{proj}}
	\mathcal Y^{\mathrm{vrt}}
	=
	\mathcal X^{\mathrm{pre}}$
	by \eqref{eq:ReachableSet_Image}.
	Since $\boldsymbol x$ was arbitrary,
	$\mathcal X^{\mathrm{aff}}
	\subseteq
	\mathcal X^{\mathrm{pre}}$.
\end{proof}

The affine reconstruction
$\boldsymbol y
=
\boldsymbol G_{b,t-1}^{\mathrm{aux}}\boldsymbol\zeta
-
\boldsymbol\beta_{b,t-1}^{\mathrm{aux}}
\in\mathbb R^{d_{b,t}}$
maps each $I_b$-dimensional base coordinate
$\boldsymbol\zeta$ to a $d_{b,t}$-dimensional feasible point in
the lifted polytope, where
$d_{b,t}=I_b(2V_{b,t}+1)$.
Its first $I_b$ components recover the shared predecessor
temperature state, while the remaining $2I_bV_{b,t}$ components
comprise the successor-state base coordinates and HVAC controls
for all $V_{b,t}$ vertex scenarios. Specifically,
$\boldsymbol G_{b,t-1}^{\mathrm{aux}}$ defines the linear part of
this lifting, with one row for each lifted component and one column
for each base-coordinate component, while
$-\boldsymbol\beta_{b,t-1}^{\mathrm{aux}}$ provides the translation
in the lifted space. Operationally, this construction associates
every candidate predecessor temperature state in
$\mathcal X_{b,t-1}^{\mathrm{aff}}$ with a feasible successor-state
and HVAC-control response for each extreme combination of outdoor
temperature and heat gain. The nonnegative matrix
$\boldsymbol\Lambda_{b,t-1}^{\mathrm{aux}}$ certifies that all lifted
constraints are satisfied for every point in the base polytope.
Hence, robust set containment is certified without explicitly
computing the high-dimensional projection.

Let $\mathcal D_{b,t-1}:=\{\boldsymbol\Gamma_{b,t-1}^{\mathrm{aff}},\boldsymbol\gamma_{b,t-1}^{\mathrm{aff}},\boldsymbol G_{b,t-1}^{\mathrm{aux}},\boldsymbol\beta_{b,t-1}^{\mathrm{aux}},\boldsymbol\Lambda_{b,t-1}^{\mathrm{aux}}\}$. The computable offline problem is the linear program
\begin{equation}
	\eqfitu{
		\begin{aligned}
			\underset{\mathcal D_{b,t-1}}{\max}\quad
			&\operatorname{Tr}(\boldsymbol\Gamma_{b,t-1}^{\mathrm{aff}})\\
			\mathrm{s.t.}\quad
			&\boldsymbol\Gamma_{b,t-1}^{\mathrm{aff}}
			=\boldsymbol\Gamma_{b,t-1}^{\mathrm{proj}}
			\boldsymbol G_{b,t-1}^{\mathrm{aux}},\\
			&-\boldsymbol\gamma_{b,t-1}^{\mathrm{aff}}
			=\boldsymbol\Gamma_{b,t-1}^{\mathrm{proj}}
			\boldsymbol\beta_{b,t-1}^{\mathrm{aux}},\\
			&\boldsymbol\Lambda_{b,t-1}^{\mathrm{aux}}\boldsymbol H_b^{\mathrm{base}}
			=\boldsymbol H_{b,t-1}^{\mathrm{vrt}}
			\boldsymbol G_{b,t-1}^{\mathrm{aux}},\\
			&\boldsymbol\Lambda_{b,t-1}^{\mathrm{aux}}\boldsymbol h_b^{\mathrm{base}}
			\le\boldsymbol h_{b,t-1}^{\mathrm{vrt}}
			+\boldsymbol H_{b,t-1}^{\mathrm{vrt}}
			\boldsymbol\beta_{b,t-1}^{\mathrm{aux}},\\
			&\boldsymbol\Lambda_{b,t-1}^{\mathrm{aux}}\ge\boldsymbol 0.
	\end{aligned}}
	\label{eq:ReachableSet_Problem_Reformulate}
\end{equation}
At each backward step,
$\boldsymbol H_{b,t-1}^{\mathrm{vrt}}$ and
$\boldsymbol h_{b,t-1}^{\mathrm{vrt}}$ are fixed because the
next-period affine set has already been computed. Hence, the objective
and all constraints are linear in $\mathcal D_{b,t-1}$. Starting from
\eqref{eq:ReachableSet_AffineTerminal} and solving
\eqref{eq:ReachableSet_Problem_Reformulate} backward for
$t=T,T-1,\ldots,2$ yields certified affine inner approximations
for all periods $t=1,\ldots,T$.

\subsection{Aggregation and Disaggregation Policy}\label{subsec:Framework_Aggregation}
Using the offline-computed affine inner approximations $\mathcal X_{b,t}^{\mathrm{aff}}$, we then develop a time-causal policy for scalable real-time aggregation and recursively feasible disaggregation.

\subsubsection{Per-Period Reformulation}
The affine-set constraint encodes remaining-horizon feasibility as a constraint on the resulting current-period state. For building $b$, define the per-period joint state--control feasible set as
\begin{equation}
	\begin{aligned}
		\widetilde{\mathcal Y}_{b,t}^{\mathrm{ac}}
		(\boldsymbol w_{b,t},\boldsymbol x_{b,t-1}^{\mathrm{in}})
		:=\Bigl\{&[(\boldsymbol x_{b,t}^{\mathrm{in}})^\top,
		(\boldsymbol u_{b,t}^{\mathrm{ac}})^\top]^\top\ \Bigm|\ 
		\eqref{eq:HVAC_thermal_compact},\ \eqref{eq:HVAC_limit_compact},\\[-0.2em]
		&\boldsymbol x_{b,t}^{\mathrm{in}}\in
		\mathcal X_{b,t}^{\mathrm{aff}}\Bigr\}\subseteq\mathbb R^{2I_b}.
	\end{aligned}
	\label{eq:HVAC_flexibility_Y_decoupled}
\end{equation}
The corresponding building-power flexibility set is
\begin{equation}
	\adjustbox{max width=\linewidth}{\(\displaystyle
		\begin{aligned}
			\widetilde{\mathcal U}_{b,t}^{\mathrm{bld}}
			(\boldsymbol w_{b,t},\boldsymbol x_{b,t-1}^{\mathrm{in}})
			:=\Bigl\{u_{b,t}^{\mathrm{bld}}\ \Bigm|\ &
			\exists\,[ (\boldsymbol x_{b,t}^{\mathrm{in}})^\top,
			(\boldsymbol u_{b,t}^{\mathrm{ac}})^\top]^\top
			\in\widetilde{\mathcal Y}_{b,t}^{\mathrm{ac}}
			(\boldsymbol w_{b,t},\boldsymbol x_{b,t-1}^{\mathrm{in}}),\\[-0.2em]
			&u_{b,t}^{\mathrm{bld}}=
			\sum_{i\in\mathcal I_b}u_{b,i,t}^{\mathrm{ac}}\Bigr\}
			\subseteq\mathbb R.
		\end{aligned}
		\)}
	\label{eq:Building_flexibility_decoupled}
\end{equation}
Using $\boldsymbol w_t:=[\boldsymbol w_{b,t}]_{b\in\mathcal B}$ and
$\boldsymbol x_{t-1}^{\mathrm{in}}
:=[\boldsymbol x_{b,t-1}^{\mathrm{in}}]_{b\in\mathcal B}$,
the fleet-level flexibility set is
\begin{equation}
	\adjustbox{max width=\linewidth}{\(\displaystyle
		\begin{aligned}
			\widetilde{\mathcal U}_t^{\mathrm{agg}}
			(\boldsymbol w_t,\boldsymbol x_{t-1}^{\mathrm{in}})
			&:=\bigoplus_{b\in\mathcal B}
			\widetilde{\mathcal U}_{b,t}^{\mathrm{bld}}
			(\boldsymbol w_{b,t},\boldsymbol x_{b,t-1}^{\mathrm{in}})\\
			&=\Bigl\{u_t^{\mathrm{agg}}\ \Bigm|\
			\exists\,\{u_{b,t}^{\mathrm{bld}}\}_{b\in\mathcal B}:\
			u_t^{\mathrm{agg}}
			=\sum_{b\in\mathcal B}u_{b,t}^{\mathrm{bld}},\\[-0.2em]
			&\hspace{5em}
			u_{b,t}^{\mathrm{bld}}
			\in\widetilde{\mathcal U}_{b,t}^{\mathrm{bld}}
			(\boldsymbol w_{b,t},\boldsymbol x_{b,t-1}^{\mathrm{in}}),
			\ \forall b\in\mathcal B
			\Bigr\}\subseteq\mathbb R.
		\end{aligned}
		\)}
	\label{eq:Agg_flexibility_decoupled}
\end{equation}

\subsubsection{Real-Time Aggregation Policy}
For the feasible current states considered in real-time operation,
$\widetilde{\mathcal Y}_{b,t}^{\mathrm{ac}}
(\boldsymbol w_{b,t},\boldsymbol x_{b,t-1}^{\mathrm{in}})$
is nonempty. Because it is defined by affine equalities and
inequalities, with the state constrained to the bounded and closed
set $\mathcal X_{b,t}^{\mathrm{aff}}$ and the HVAC powers subject to
bounded limits, it is a compact polytope. Its scalar linear image is
therefore a closed interval. Suppressing the arguments for brevity, we have
\begin{equation}
	\widetilde{\mathcal U}_{b,t}^{\mathrm{bld}}
	=\left[u_{b,t}^{\mathrm{bld},\downarrow},
	u_{b,t}^{\mathrm{bld},\uparrow}\right].
	\label{eq:Building_interval}
\end{equation}
The superscripts $\downarrow$ and $\uparrow$ label the lower and upper
endpoints and their corresponding optimal state and zone-power
profiles. Compactness ensures that both endpoints are attained by
solving the following linear programs:
\begin{equation}
	\begin{aligned}
		u_{b,t}^{\mathrm{bld},\downarrow}
		&=\underset{[(\boldsymbol x_{b,t}^{\mathrm{in},\downarrow})^\top,
			(\boldsymbol u_{b,t}^{\mathrm{ac},\downarrow})^\top]^\top
			\in\widetilde{\mathcal Y}_{b,t}^{\mathrm{ac}}(\boldsymbol w_{b,t},\boldsymbol x_{b,t-1}^{\mathrm{in}})}{\min}
		\ \sum_{i\in\mathcal I_b}u_{b,i,t}^{\mathrm{ac},\downarrow},\\
		u_{b,t}^{\mathrm{bld},\uparrow}
		&=\underset{[(\boldsymbol x_{b,t}^{\mathrm{in},\uparrow})^\top,
			(\boldsymbol u_{b,t}^{\mathrm{ac},\uparrow})^\top]^\top
			\in\widetilde{\mathcal Y}_{b,t}^{\mathrm{ac}}(\boldsymbol w_{b,t},\boldsymbol x_{b,t-1}^{\mathrm{in}})}{\max}
		\ \sum_{i\in\mathcal I_b}u_{b,i,t}^{\mathrm{ac},\uparrow}.
	\end{aligned}
	\label{eq:Building_Problem_Interval}
\end{equation}
Once the building-level intervals are computed, the aggregate flexibility set in \eqref{eq:Agg_flexibility_decoupled} exactly reduces to the closed-form Minkowski sum of intervals:
\begin{equation}
	\begin{aligned}
		\widetilde{\mathcal U}_t^{\mathrm{agg}}
		&=\bigoplus_{b\in\mathcal B}
		\left[u_{b,t}^{\mathrm{bld},\downarrow},
		u_{b,t}^{\mathrm{bld},\uparrow}\right]\\
		&=\left[
		\sum_{b\in\mathcal B}u_{b,t}^{\mathrm{bld},\downarrow},
		\sum_{b\in\mathcal B}u_{b,t}^{\mathrm{bld},\uparrow}
		\right].
	\end{aligned}
	\label{eq:Agg_Calculation_Interval}
\end{equation}
The aggregator reports \eqref{eq:Agg_Calculation_Interval} as the time-causal admissible aggregate power range for period $t$. This closed-form interval summation avoids high-dimensional online Minkowski-sum calculations and enables scalable aggregation.

\subsubsection{Real-Time Disaggregation Policy}
After receiving $u_t^{\mathrm{reg}}\in\widetilde{\mathcal U}_t^{\mathrm{agg}}(\boldsymbol w_t,\boldsymbol x_{t-1}^{\mathrm{in}})$, the aggregator must map it to zone-level HVAC powers while preserving aggregate power balance, current feasibility, and recursive feasibility. The interval structure and convexity of the building-level feasible sets yield the following direct construction without any additional fleet-level online optimization problem.

\begin{proposition}[Closed-form feasible disaggregation]\label{prop:Disaggregation}
	Given any
	\begin{equation}
		u_t^{\mathrm{reg}}
		\in
		\left[
		\sum_{b\in\mathcal B}u_{b,t}^{\mathrm{bld},\downarrow},
		\sum_{b\in\mathcal B}u_{b,t}^{\mathrm{bld},\uparrow}
		\right],
	\end{equation}
	define
	\begin{equation}
		\eqfitu{
			\lambda_t=
			\frac{
				u_t^{\mathrm{reg}}-\sum_{b\in\mathcal B}u_{b,t}^{\mathrm{bld},\downarrow}
			}{
				\sum_{b\in\mathcal B}u_{b,t}^{\mathrm{bld},\uparrow}
				-
				\sum_{b\in\mathcal B}u_{b,t}^{\mathrm{bld},\downarrow}
			}
			\in[0,1].
		}
		\label{eq:Disagg_coef_building}
	\end{equation}
	Then, for every building $b\in\mathcal B$, choose the zone-level power and temperature profile as
	\begin{equation}
		\eqfitu{
			\begin{aligned}
				\begin{bmatrix}
					\boldsymbol x_{b,t}^{\mathrm{in}}\\
					\boldsymbol u_{b,t}^{\mathrm{ac}}
				\end{bmatrix}
				&=(1-\lambda_t)
				\begin{bmatrix}
					\boldsymbol x_{b,t}^{\mathrm{in},\downarrow}\\
					\boldsymbol u_{b,t}^{\mathrm{ac},\downarrow}
				\end{bmatrix}
				+\lambda_t
				\begin{bmatrix}
					\boldsymbol x_{b,t}^{\mathrm{in},\uparrow}\\
					\boldsymbol u_{b,t}^{\mathrm{ac},\uparrow}
				\end{bmatrix}.
			\end{aligned}
		}
		\label{eq:Disagg_profile_zone}
	\end{equation}
	The resulting allocation is time-causal, feasible at period $t$, exactly tracks $u_t^{\mathrm{reg}}$, and is recursively feasible over the remaining horizon.
\end{proposition}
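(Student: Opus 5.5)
The plan is to verify the four claimed properties in turn. Everything rests on two facts: each per-period set $\widetilde{\mathcal Y}_{b,t}^{\mathrm{ac}}(\boldsymbol w_{b,t},\boldsymbol x_{b,t-1}^{\mathrm{in}})$ is convex, and the nested chain \eqref{eq:ReachableSet_Chain} holds.

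First, I will address the well-definedness of $\lambda_t$. Because $u_t^{\mathrm{reg}}$ lies in the reported interval, the numerator of \eqref{eq:Disagg_coef_building} is nonnegative and no larger than the denominator, so $\lambda_t\in[0,1]$. The degenerate case, where the denominator vanishes, has to be handled separately. In that case every building's interval collapses to a point, so $u_t^{\mathrm{reg}}$ equals the common endpoint. Any $\lambda_t\in[0,1]$, for instance $\lambda_t=0$, then works, and I will state this convention explicitly. Time-causality is then immediate. The endpoint profiles come from the linear programs \eqref{eq:Building_Problem_Interval}, whose data are only $\boldsymbol w_{b,t}$, $\boldsymbol x_{b,t-1}^{\mathrm{in}}$ and the offline sets $\mathcal X_{b,t}^{\mathrm{aff}}$. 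The coefficient $\lambda_t$ depends only on these endpoints and on $u_t^{\mathrm{reg}}$, so no future exogenous information enters.

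Second, I will establish current-period feasibility through convexity. Both endpoint pairs belong to $\widetilde{\mathcal Y}_{b,t}^{\mathrm{ac}}$. This set is cut out by the affine equality \eqref{eq:HVAC_thermal_compact}, which has $\boldsymbol x_{b,t-1}^{\mathrm{in}}$ and $\boldsymbol w_{b,t}$ fixed; by the linear inequalities \eqref{eq:HVAC_limit_compact}; and by membership in the convex set $\mathcal X_{b,t}^{\mathrm{aff}}$. It is therefore convex, so the convex combination \eqref{eq:Disagg_profile_zone} also lies in it. This single observation yields the thermal dynamics, the comfort bounds, the power limits, and $\boldsymbol x_{b,t}^{\mathrm{in}}\in\mathcal X_{b,t}^{\mathrm{aff}}$.

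Third, exact tracking follows from linearity of the sum. Summing zone powers gives $u_{b,t}^{\mathrm{bld}}=(1-\lambda_t)u_{b,t}^{\mathrm{bld},\downarrow}+\lambda_t u_{b,t}^{\mathrm{bld},\uparrow}$. Summing over $b$ and substituting \eqref{eq:Disagg_coef_building} then gives $\sum_b u_{b,t}^{\mathrm{bld}}=\sum_b u_{b,t}^{\mathrm{bld},\downarrow}+\lambda_t\bigl(\sum_b u_{b,t}^{\mathrm{bld},\uparrow}-\sum_b u_{b,t}^{\mathrm{bld},\downarrow}\bigr)=u_t^{\mathrm{reg}}$.

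Recursive feasibility is the step that needs the most care, because it must be argued for every future realization, not just the next one. I will use $\boldsymbol x_{b,t}^{\mathrm{in}}\in\mathcal X_{b,t}^{\mathrm{aff}}\subseteq\mathcal X_{b,t}^{\mathrm{pre}}\subseteq\mathcal X_{b,t}^{\mathrm{rch}}$ from \eqref{eq:ReachableSet_Chain}, which is guaranteed by Proposition~\ref{prop:SetContain}. Proposition~\ref{prop:CausalFeasibility} then already gives a feasible time-causal continuation. I also want to show that the proposed policy itself remains applicable at period $t+1$. Since $\boldsymbol x_{b,t}^{\mathrm{in}}\in\mathcal X_{b,t}^{\mathrm{pre}}$, the definition \eqref{eq:ReachableSet_Predecessor} ensures that for every $\boldsymbol w_{b,t+1}\in\mathcal W_{b,t+1}$ there is a pair satisfying \eqref{eq:HVAC_thermal_compact} and \eqref{eq:HVAC_limit_compact} with successor in $\mathcal X_{b,t+1}^{\mathrm{aff}}$. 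Hence $\widetilde{\mathcal Y}_{b,t+1}^{\mathrm{ac}}$ is nonempty, and it is compact because $\mathcal X_{b,t+1}^{\mathrm{aff}}$ is bounded and the powers are bounded. The linear programs \eqref{eq:Building_Problem_Interval} are therefore solvable and the reported interval at $t+1$ is nonempty.

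Induction on $t$ closes the argument and covers the whole horizon. The base case is period $1$, where the initial state must lie in $\mathcal X_{b,0}^{\mathrm{pre}}$, or is otherwise assumed feasible as the paper's phrase ``feasible current states'' indicates; I will state this as the standing assumption. At $t=T$ the terminal set equals the comfort box, so nothing remains to be checked.
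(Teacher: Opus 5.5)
Your proposal is correct and follows the paper's proof essentially step for step. You use convexity of $\widetilde{\mathcal Y}_{b,t}^{\mathrm{ac}}$ for current-period feasibility, linearity of the building sums for exact tracking, and the inclusion $\mathcal X_{b,t}^{\mathrm{aff}}\subseteq\mathcal X_{b,t}^{\mathrm{pre}}$ to obtain a nonempty, compact next-period set, on which the same interpolation argument repeats through period $T$; your explicit handling of the zero-denominator case and of the period-$1$ nonemptiness assumption also makes precise two points the paper leaves implicit (the paper's recursion stops at period $1$, so $\mathcal X_{b,0}^{\mathrm{pre}}$ is not defined, and your fallback standing assumption is the right formulation).
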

\begin{proof}
	For each building $b$, both endpoint joint profiles in \eqref{eq:Building_Problem_Interval} belong to the convex polytope $\widetilde{\mathcal Y}_{b,t}^{\mathrm{ac}}(\boldsymbol w_{b,t},\boldsymbol x_{b,t-1}^{\mathrm{in}})$. Since $\lambda_t\in[0,1]$, the convex combination in \eqref{eq:Disagg_profile_zone} belongs to the same set and therefore satisfies the current thermal dynamics and operating limits. Linearity also gives
	\begin{equation}
		\sum_{i\in\mathcal I_b}u_{b,i,t}^{\mathrm{ac}}
		=(1-\lambda_t)u_{b,t}^{\mathrm{bld},\downarrow}
		+\lambda_t u_{b,t}^{\mathrm{bld},\uparrow}.
		\label{eq:Disagg_proof}
	\end{equation}
	Summing \eqref{eq:Disagg_proof} over $b\in\mathcal B$ and substituting \eqref{eq:Disagg_coef_building} yields
	\begin{equation}
		\sum_{b\in\mathcal B}\sum_{i\in\mathcal I_b}u_{b,i,t}^{\mathrm{ac}}
		=u_t^{\mathrm{reg}},
		\label{eq:Disagg_balance}
	\end{equation}
	so the aggregate HVAC power command is tracked exactly.
	Membership in
    $\widetilde{\mathcal Y}_{b,t}^{\mathrm{ac}}$
    also gives
    $\boldsymbol x_{b,t}^{\mathrm{in}}
    \in\mathcal X_{b,t}^{\mathrm{aff}}$.
    For $t<T$, the backward construction in
    \eqref{eq:ReachableSet_Chain} gives
    $\mathcal X_{b,t}^{\mathrm{aff}}
    \subseteq\mathcal X_{b,t}^{\mathrm{pre}}$.
    Hence, by \eqref{eq:ReachableSet_Predecessor}, for every
    $\boldsymbol w_{b,t+1}\in\mathcal W_{b,t+1}$, the next-period feasible set
    $\widetilde{\mathcal Y}_{b,t+1}^{\mathrm{ac}}
    (\boldsymbol w_{b,t+1},\boldsymbol x_{b,t}^{\mathrm{in}})$
    is nonempty and contains a feasible state--control pair whose successor
    state lies in $\mathcal X_{b,t+1}^{\mathrm{aff}}$.
    At period $t+1$, the same convex-interpolation argument maps any aggregate
    power command within the newly reported interval to feasible building-level
    profiles and keeps each resulting state in
    $\mathcal X_{b,t+1}^{\mathrm{aff}}$.
    Repeating this argument recursively yields feasible disaggregation through
    period $T$ for every admissible future exogenous-input sequence.
    For $t=T$, the terminal-set constraint in
    \eqref{eq:ReachableSet_AffineTerminal} completes the horizon.
    Finally, the endpoint problems, coefficient calculation, and interpolation
    at each period use only the current state, revealed exogenous input, and
    current aggregate HVAC power command. Therefore, the resulting policy is
    time-causal and recursively feasible.
\end{proof}

The key intuition behind Proposition~\ref{prop:Disaggregation} is that the common coefficient $\lambda_t$ acts as a relative interpolation factor, rather than assigning the same absolute power to every building or zone. Specifically, relative to its lower endpoint, building $b$ provides the adjustment $\lambda_t(u_{b,t}^{\mathrm{bld},\uparrow}-u_{b,t}^{\mathrm{bld},\downarrow})$, so a building with a wider feasible interval contributes a larger adjustment. At the zone level, \eqref{eq:Disagg_profile_zone} interpolates between the two building-specific feasible state--control profiles obtained from \eqref{eq:Building_Problem_Interval}, rather than between the nameplate power limits of individual HVAC units. Because these endpoint problems incorporate the thermal dynamics, current temperatures, revealed exogenous conditions, thermal parameters, power limits, and zone-specific HVAC efficiencies $\eta_{b,i}^{\mathrm{ac}}$, local efficiency differences affect the interval widths and endpoint profiles used in the allocation.

\begin{remark}[Scope of the disaggregation policy]
	The closed-form disaggregation policy \eqref{eq:Disagg_profile_zone} is designed to prioritize computational efficiency and guaranteed feasibility rather than optimize a fleet-level economic objective. Although local HVAC efficiency differences are reflected in the building-level feasible sets and endpoint profiles, the common coefficient $\lambda_t$ does not explicitly prioritize chillers according to their operating costs. Cost-aware, comfort-aware, or priority-aware dispatch can instead be obtained by solving a fleet-level optimization problem over the same building-level feasible sets, subject to the aggregate power-balance constraint. Retaining these feasible-set constraints would preserve current-period and recursive feasibility, but the additional coordination problem would increase online computation and communication requirements. This reflects the trade-off between richer allocation objectives and the computational simplicity and real-time scalability of the proposed closed-form policy. 
	\label{remark:DisaggregationScope}
\end{remark}

\subsection{Implementation Procedure}\label{subsec:Framework_Application}
Algorithm~\ref{alg:Implementation} summarizes the implementation procedure of our offline--online aggregation framework. The offline stage is executed once before the scheduling horizon, whereas the real-time stage is repeated at every period, so offline reachable-set construction is separated from period-by-period online operation. This provides an efficient, time-causal, recursively feasible, and scalable workflow.
\begin{algorithm}[!t]
	\caption{Offline--online implementation procedure}
	\label{alg:Implementation}
	\footnotesize
	\begin{algorithmic}[1]
		\Statex \textbf{Offline stage (once before the scheduling horizon)}
		\ForAll{$b\in\mathcal B$ \textbf{ in parallel}}
		\State Initialize $\mathcal X_{b,T}^{\mathrm{aff}}$ using \eqref{eq:ReachableSet_AffineTerminal}.
		\For{$t=T,T-1,\ldots,2$}
		\State Form $\mathcal Y_{b,t-1}^{\mathrm{vrt}}$ using \eqref{eq:ReachableSet_High_Vertex}.
		\State Solve the containment linear program \eqref{eq:ReachableSet_Problem_Reformulate}.
		\State Store the certified $\mathcal X_{b,t-1}^{\mathrm{aff}}$.
		\EndFor
		\EndFor
		\Statex \textbf{Real-time stage (repeat for $t=1,\ldots,T$)}
		\For{$t=1,2,\ldots,T$}
		\ForAll{$b\in\mathcal B$ \textbf{ in parallel}}
		\State Observe $\boldsymbol x_{b,t-1}^{\mathrm{in}}$ and $\boldsymbol w_{b,t}$.
		\State Solve \eqref{eq:Building_Problem_Interval} and store both endpoint profiles.
		\EndFor
		\State Compute and report $\widetilde{\mathcal U}_t^{\mathrm{agg}}$ using \eqref{eq:Agg_Calculation_Interval}.
		\State Receive $u_t^{\mathrm{reg}}\in\widetilde{\mathcal U}_t^{\mathrm{agg}}$.
		\State Compute the interpolation coefficient as
		\Statex \hspace{\algorithmicindent}$\displaystyle
		\lambda_t\gets
		\frac{u_t^{\mathrm{reg}}-\sum_{b\in\mathcal B}u_{b,t}^{\mathrm{bld},\downarrow}}
		{\sum_{b\in\mathcal B}u_{b,t}^{\mathrm{bld},\uparrow}
			-\sum_{b\in\mathcal B}u_{b,t}^{\mathrm{bld},\downarrow}}$.
		\ForAll{$b\in\mathcal B$ \textbf{ in parallel}}
		\State Interpolate the joint profile using \eqref{eq:Disagg_profile_zone}.
		\EndFor
		\State Apply the selected zone powers and update $\boldsymbol x_{b,t}^{\mathrm{in}}$ for all $b$.
		\EndFor
	\end{algorithmic}
\end{algorithm}

\begin{remark}[Building-level separability and scalability]
	The thermal dynamics \eqref{eq:HVAC_thermal} and operating constraints \eqref{eq:HVAC_temp_limit}-\eqref{eq:HVAC_power_limit} are separable across buildings. Consequently, the offline containment linear programs in \eqref{eq:ReachableSet_Problem_Reformulate} and the online endpoint linear programs in \eqref{eq:Building_Problem_Interval} can be solved independently across buildings and in parallel. The aggregator then forms the fleet-level interval in \eqref{eq:Agg_Calculation_Interval} by summing the building-level endpoints. After receiving the aggregate HVAC power command, it broadcasts the common coefficient in \eqref{eq:Disagg_coef_building}, which each building uses to interpolate its stored endpoint profiles. Thus, increasing the fleet size adds only independent building-level linear programs and lightweight scalar coordination. Neither a fleet-scale online linear program nor a high-dimensional Minkowski-sum computation is required. With sufficient parallel computing resources, this separable structure supports efficient real-time aggregation and disaggregation of large HVAC fleets.
	\label{remark:ComputationalScalability}
\end{remark}

\section{Case Studies}\label{sec:case}
This section presents numerical case studies that evaluate the aggregate flexibility, disaggregation reliability, and computational scalability of the proposed framework under high-dimensional spatiotemporal coupling and sequentially revealed exogenous uncertainty.

\subsection{Simulation Setup}\label{subsec:Case_Setup}
All HVAC parameters sampled from Table~\ref{tab:HVACParameter}, including thermal resistances, capacitances, efficiencies, setpoints, comfort bands, and power limits, are sampled independently across buildings and zones before the offline stage.
Here, $\operatorname{Unif}[a,b]$ denotes the uniform distribution on $[a,b]$. Parameter values are randomly sampled from these distributions to reflect heterogeneity. The time-varying comfort half-band is defined as
$
\beta_{b,i,t}^{\mathrm{set}}
=\delta_{b,i,t}\bar\beta_{b,i}^{\mathrm{set}},
$
where $\delta_{b,i,t}$ is a time-varying comfort-tolerance coefficient, and the base comfort half-band $\bar\beta_{b,i}^{\mathrm{set}}$ is sampled from the distribution specified in Table~\ref{tab:HVACParameter}.

The simulations are conducted in MATLAB on a laptop with an Intel Core Ultra~7 265U processor and 32~GB of RAM. Gurobi \cite{gurobi} is used to solve the linear programs, and MATLAB's \texttt{parfor} is configured to use 12 workers for parallel computation.

\begin{table}[!t]
	\centering
	\caption{HVAC parameter ranges used in the simulations.}
	\label{tab:HVACParameter}
	\footnotesize
	\setlength{\tabcolsep}{2pt}
	\renewcommand{\arraystretch}{1.08}
	\begin{tabularx}{\linewidth}{@{}C{0.15\linewidth}YC{0.30\linewidth}@{}}
		\toprule
		\textbf{Param.} & \textbf{Description} & \textbf{Range} \\
		\midrule
		$r_{b,ij}^{\mathrm{in}}$ & Interzone thermal resistance & $\operatorname{Unif}[2,6]~{}^\circ\mathrm C/\mathrm{kW}$ \\
		$r_{b,i}^{\mathrm{out}}$ & Zone-to-ambient thermal resistance & $\operatorname{Unif}[1,3]~{}^\circ\mathrm C/\mathrm{kW}$ \\
		$c_{b,i}^{\mathrm{in}}$ & Zone thermal capacitance & $\operatorname{Unif}[1,5]~\mathrm{kWh}/{}^\circ\mathrm C$ \\
		$\eta_{b,i}^{\mathrm{gain}}$ & Zone heat-gain mapping coefficient& $\operatorname{Unif}[0.2,0.8]~\mathrm{kW}/\mathrm{p.u.}$ \\
		$\eta_{b,i}^{\mathrm{ac}}$ & HVAC efficiency coefficient & $\operatorname{Unif}[2,4]$ \\
		$x_{b,i,t}^{\mathrm{set}}$ & Indoor-temperature setpoint & $\operatorname{Unif}[20,24]~{}^\circ\mathrm C$ \\
		$\bar\beta_{b,i}^{\mathrm{set}}$ & Base comfort half-band & $\operatorname{Unif}[1,3]~{}^\circ\mathrm C$ \\
		$\underline u_{b,i}^{\mathrm{ac}}$ & Minimum HVAC power & $\operatorname{Unif}[0.3,0.7]~\mathrm{kW}$ \\
		$\overline u_{b,i}^{\mathrm{ac}}$ & Maximum HVAC power & $\operatorname{Unif}[2,4]~\mathrm{kW}$ \\
		\bottomrule
	\end{tabularx}
\end{table}

\subsection{An Illustrative Example}\label{subsec:Case_Illustrative}

We first consider a thermally coupled two-zone HVAC system to illustrate the role of reachable sets. The system consists of one building ($B=1$) with $I_1=2$ zones and operates over $T=4$ periods of duration $\Delta=1$~h. The temperatures of zones $\mathrm A$ and $\mathrm B$ are denoted by $x^{\mathrm A}$ and $x^{\mathrm B}$, respectively. Both zones use the four-period comfort-tolerance coefficient sequence
$(\delta_{1,i,t})_{t=1}^{4}=(1,0.5,1,0.5)$ for $i\in\mathcal I_1$. At each period, the outdoor temperature and the building-level equivalent net heat gain are assumed to lie in the uncertainty intervals $[36,40]{}^\circ\mathrm C$ and $[0.4,0.6]$~p.u., respectively. The remaining parameters are sampled from the distributions specified in Table~\ref{tab:HVACParameter}.

\begin{figure}[!t]
	\centering
	\includegraphics[width=\linewidth]{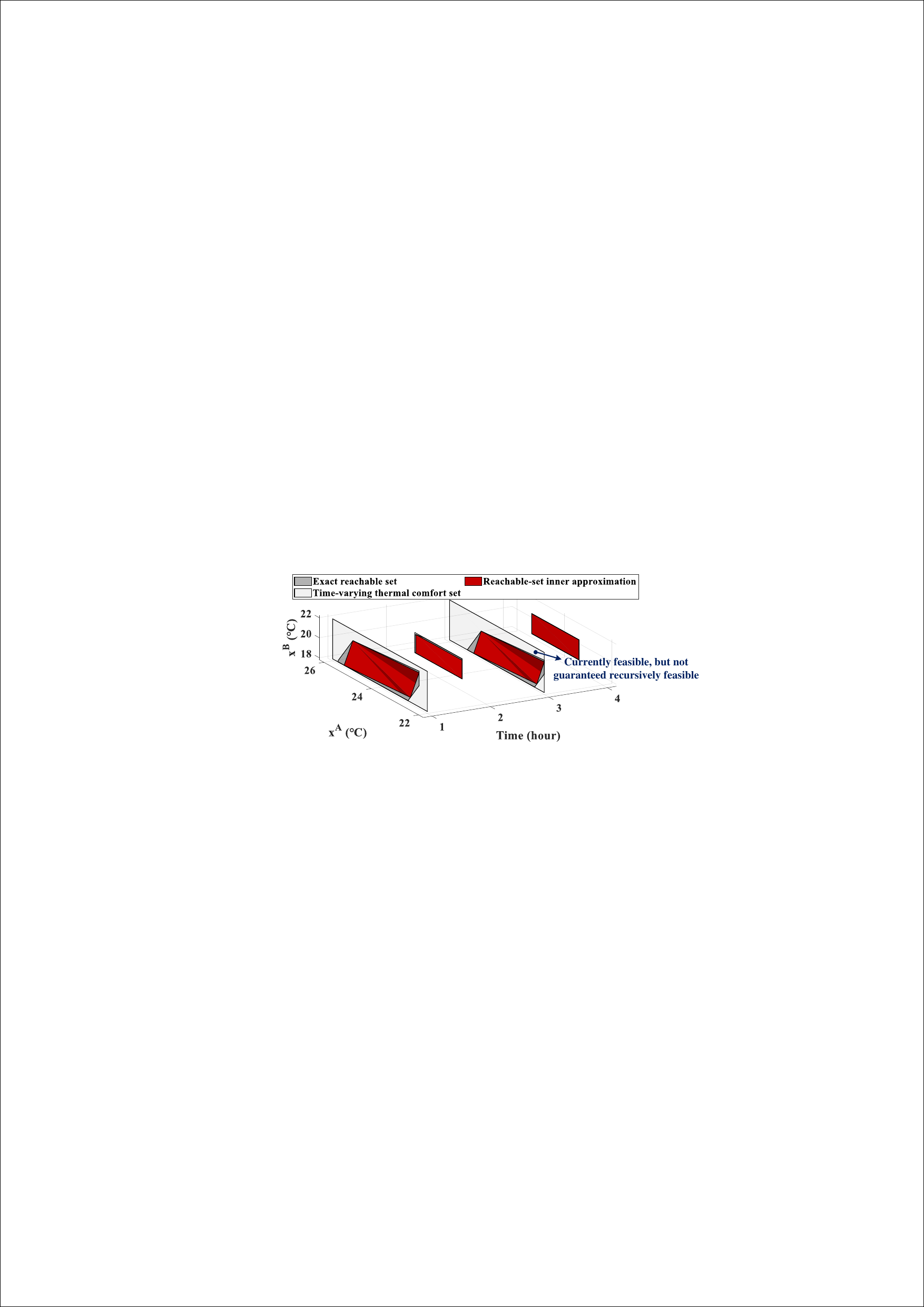}
	\caption{Exact reachable sets and certified inner approximations.}
	\label{fig:IllustrativeExample}
\end{figure}

Fig.~\ref{fig:IllustrativeExample} displays the exact reachable sets and proposed inner approximations for all four periods. At each period, the outlined rectangle represents the direct comfort bounds in \eqref{eq:HVAC_temp_limit} and captures only instantaneous comfort feasibility. Satisfying those bounds alone does not guarantee a feasible continuation because current HVAC decisions affect future temperatures. The gray exact reachable set contains states that satisfy the comfort constraints at the displayed period and admit a feasible time-causal continuation under every admissible future exogenous-input sequence. It is therefore generally a strict subset of the corresponding comfort set. Consequently, a state may satisfy the current comfort constraints yet lie outside the reachable set, in which case a feasible continuation cannot be guaranteed under sequential information revelation. The blue point illustrates such a state.

The red certified inner approximations are contained in the corresponding exact reachable sets, as established by \eqref{eq:ReachableSet_Chain} using Proposition~\ref{prop:SetContain}, while capturing the dominant cross-zone thermal coupling structure. In this two-zone example, the exact reachable sets can be computed through exact projection using Fourier--Motzkin elimination (FME). However, this computation quickly becomes intractable as the number of zones increases, as shown in Table~\ref{tab:ComputationTime}.

These results demonstrate the role of reachable sets in guaranteeing recursive feasibility in time-causal multi-period operation. They also show that the proposed inner approximations provide a tractable representation of cross-zone coupling.

\subsection{Comparison With Other Methods}\label{subsec:Case_Comparison}

We next compare the proposed framework with an oracle benchmark and two representative aggregation baselines. The scheduling horizon spans 7:00--19:00 to represent working hours, during which the buildings are occupied and their HVAC systems are in service, and consists of $T=12$ periods of duration $\Delta=1$~h. Unless otherwise specified, the fleet contains $B=100$ buildings. For each building, the number of zones $I_b$ is sampled uniformly from the discrete set $\{4,\ldots,12\}$, and the thermal parameters are sampled independently from the ranges in Table~\ref{tab:HVACParameter}. For each building, a separate sparse, connected, undirected graph is randomly generated to specify the thermally coupled zone pairs.
The solid lines in Fig.~\ref{fig:OutdoorParameter} show the forecasts of the outdoor temperature $x_{b,t}^{\mathrm{out}}$ and the normalized heat gain $q_{b,t}$, while the shaded bands indicate their admissible uncertainty ranges. 
 The time-varying comfort-tolerance coefficient $\delta_{b,i,t}$ for each zone and period is uniformly sampled within the hourly bounds shown in Fig.~\ref{fig:ToleranceParameter}. For every building and zone, this coefficient sequence is sampled before the offline stage and then held fixed across all four cases and all Monte Carlo scenarios. The black line indicates the nominal coefficient, and the gray band indicates the corresponding sampling bounds.

\begin{figure}[!t]
	\centering
	\includegraphics[width=0.95\linewidth]{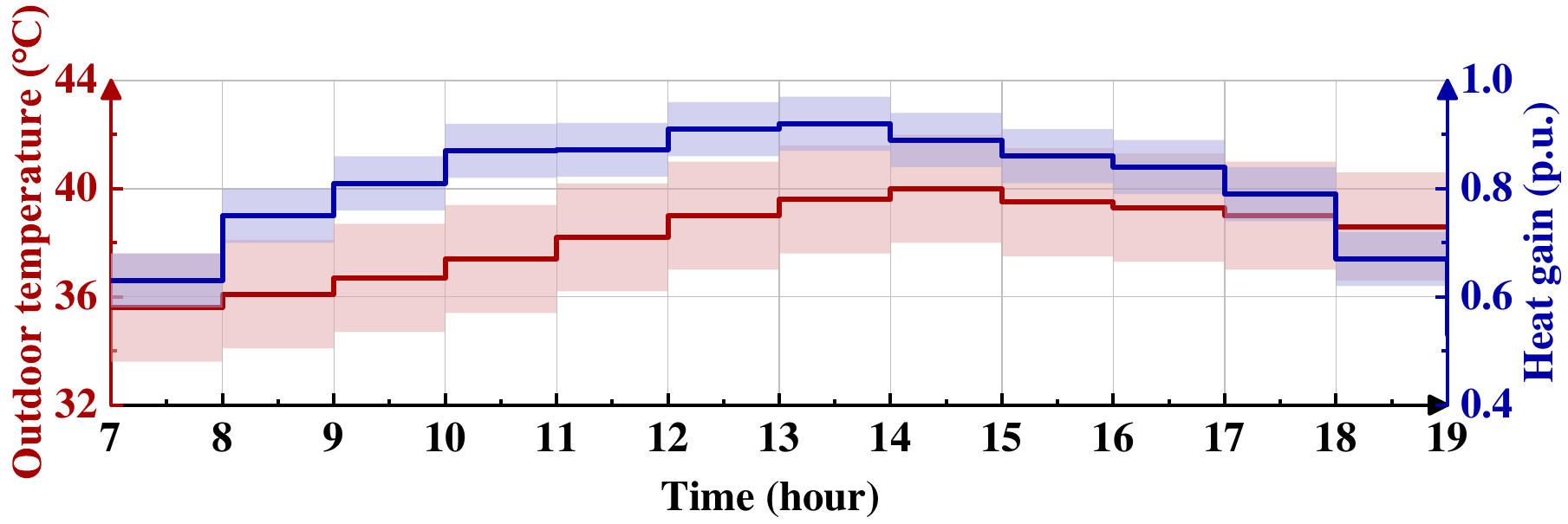}
	\caption{Outdoor-temperature and normalized heat-gain forecasts and admissible uncertainty ranges.}
	\label{fig:OutdoorParameter}
\end{figure}

\begin{figure}[!t]
	\centering
	\includegraphics[width=0.95\linewidth]{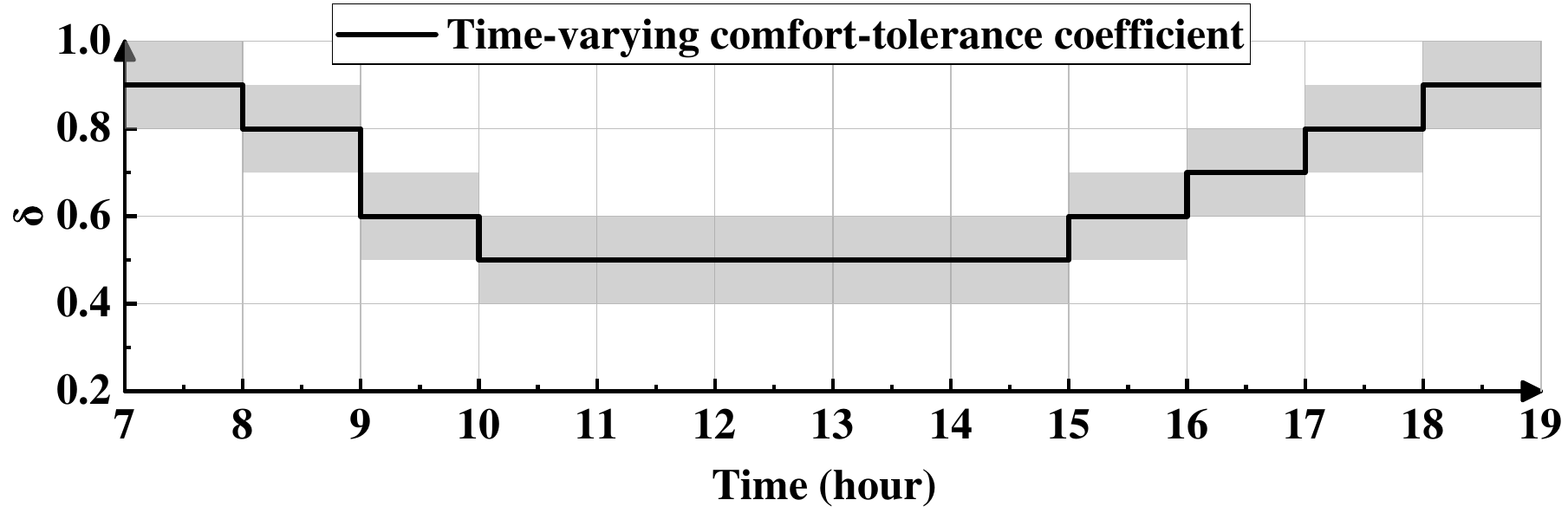}
	\caption{Nominal comfort-tolerance coefficient and hourly sampling bounds.}
	\label{fig:ToleranceParameter}
\end{figure}

Under this setup, we compare the following four cases.

\textbf{Case 0 (Oracle benchmark):} The complete realized exogenous-input trajectory is assumed to be known in hindsight. At each period, the flexibility interval is recomputed from the current state and the known remaining exogenous-input trajectory. This reference relies on future realizations and therefore cannot be implemented online. It serves only as an oracle benchmark for realizable flexibility.

\textbf{Case 1 (Proposed framework):} Certified inner approximations of the backward reachable sets are computed using the offline stage of Algorithm~\ref{alg:Implementation}. At each period, the online stage uses only the current state and revealed exogenous input to compute aggregate flexibility and disaggregate the admitted power command according to the proposed policy.

\textbf{Case 2 (Geometric-transformation baseline):} This baseline represents the class of methods in \cite{zhao2017geometric,al2024efficient,liu2025leveraging,liu2025coupling}. It uses the forecast exogenous-input trajectory shown in Fig.~\ref{fig:OutdoorParameter} to construct fixed full-horizon building-level inner approximations offline. These approximations are combined through a closed-form Minkowski sum to obtain the fleet-level aggregate flexibility set. The resulting set is used period by period during online operation. It is not reconstructed after the actual states and exogenous inputs are revealed.

\textbf{Case 3 (Boundary-optimization baseline):} This baseline represents the methods in \cite{chen2021leveraging,chen2024unlock}. Using the forecast exogenous-input trajectory in Fig.~\ref{fig:OutdoorParameter}, it constructs a fixed box inner approximation of the full-horizon aggregate flexibility set offline and applies its per-period bounds sequentially without online reconstruction.

Cases 1--3 are all operated sequentially without access to future realized exogenous inputs. In Cases 2 and 3, the period-$t$ flexibility bounds are taken from their respective fixed forecast-based full-horizon sets and applied when period $t$ is reached. Their online implementation is therefore time-causal in information use, but their reported flexibility sets are not information-adaptive: these sets are not recomputed from the realized current state and exogenous input. In contrast, Case 1 recomputes each reported interval from the latest state and revealed exogenous input while using backward reachable sets to account for unrevealed future uncertainty.

\subsubsection{Aggregation and Disaggregation Results}

\begin{figure}[!t]
	\centering
	\includegraphics[width=\linewidth]{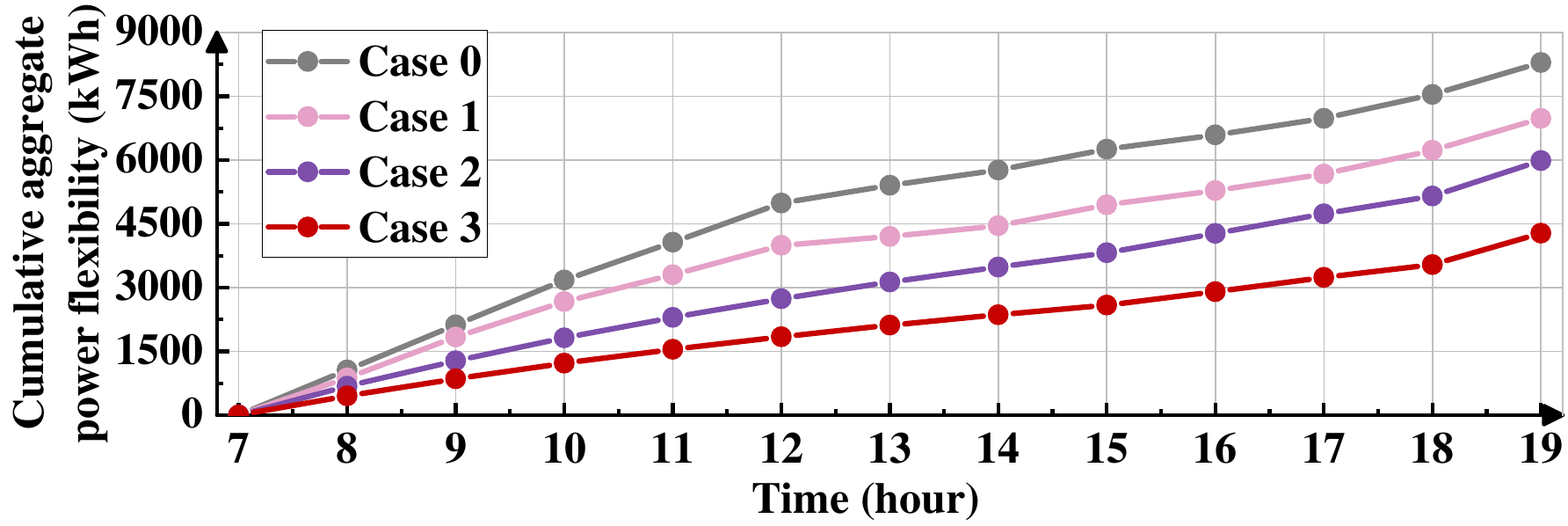}
	\caption{Cumulative aggregate flexibility under the four cases.}
	\label{fig:CumulativeFlexibility}
\end{figure}

\begin{figure}[!t]
	\centering
	\includegraphics[width=\linewidth]{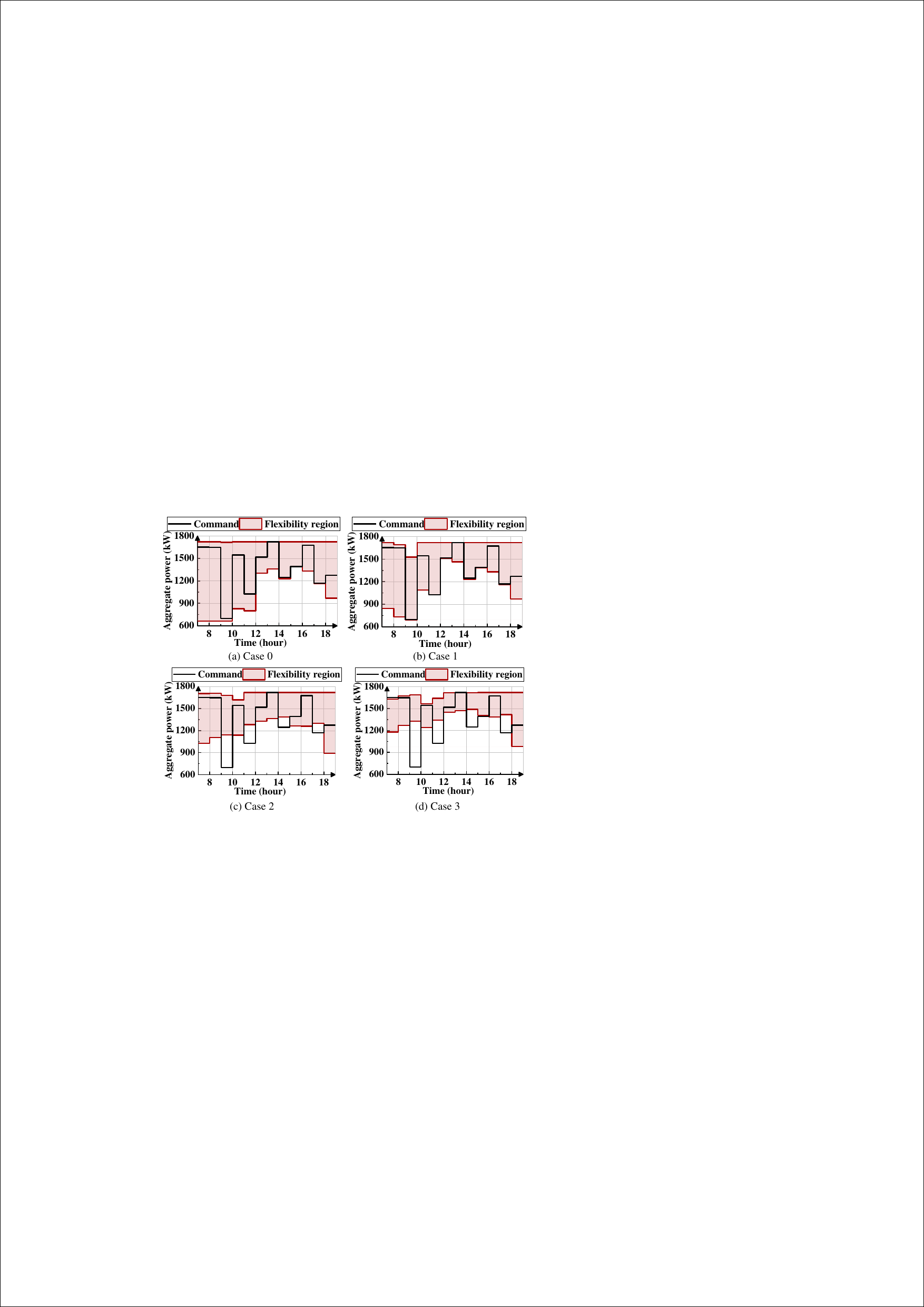}
	\caption{Period-by-period aggregate flexibility intervals and power commands.}
	\label{fig:EachPeriodFlexibility}
\end{figure}

\begin{figure}[!t]
	\centering
	\includegraphics[width=\linewidth]{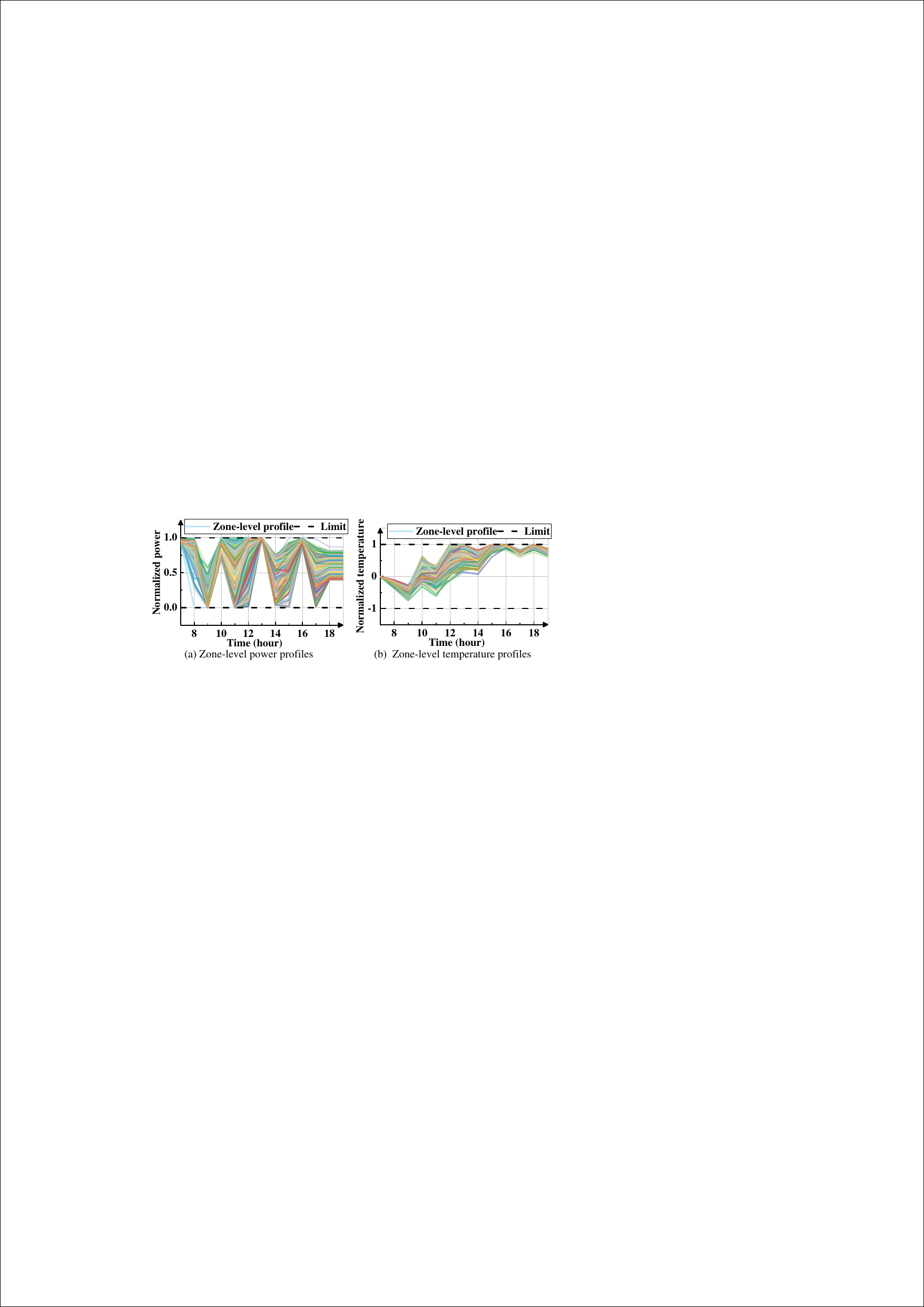}
	\caption{Normalized zone-level disaggregation trajectories in Case~1.}
	\label{fig:DisaggregateResult}
\end{figure}

Fig.~\ref{fig:CumulativeFlexibility} compares the cumulative aggregate flexibility under the four cases. At each period $t$, the cumulative aggregate flexible energy is $\Delta$ times the sum of the reported aggregate-flexibility interval widths through period $t$ and is therefore expressed in kWh. Case~1 provides substantially greater flexibility than Cases~2 and~3 while remaining below the oracle benchmark in Case~0. Case~3 is the most conservative because its fixed box approximation cannot closely capture the geometry of the true multi-period-coupled flexibility set. This mismatch leads to substantial flexibility loss. Case 2 captures temporal coupling more closely through polytope-based geometric transformations. However, its tractable full-horizon inner approximation remains conservative and is not recomputed from the realized current state and exogenous input.

The proposed method reduces this representation conservatism by encoding remaining-horizon feasibility as per-period state constraints using backward reachable sets. At each period, it recomputes a one-dimensional power interval for each building conditional on the latest information, and obtains the fleet-level interval through an exact closed-form Minkowski sum. This temporal decomposition therefore avoids constructing full-horizon power-set approximations, enables information-adaptive aggregation, and eliminates high-dimensional online set operations. The remaining gap between Cases~0 and~1 has two principal sources: the robust flexibility reservation required for unknown future exogenous-input realizations and the conservatism of the certified inner approximations of the exact backward reachable sets. The effects of exogenous uncertainty and increasing zone dimensionality are further examined in Figs.~\ref{fig:Sensitivity} and~\ref{fig:NumZoneImpact}, respectively.

Fig.~\ref{fig:EachPeriodFlexibility} shows the corresponding period-by-period flexibility intervals and power commands. During the restrictive midday periods (hours 12--16), larger outdoor-temperature and heat-gain realizations, together with tighter comfort tolerances, reduce the range of feasible HVAC operation. The resulting aggregate flexibility intervals are therefore narrower. Case~1 also reports narrower intervals than Case~0 during the preceding hours (hours 7--11). This restriction preserves sufficient thermal headroom for feasible operation during the subsequent restrictive periods under sequential information revelation. By contrast, Cases 2 and 3 remain narrow over much of the horizon. Because their fixed full-horizon sets are not recomputed from the realized current state and exogenous input, they cannot adapt the reported flexibility to newly revealed information. This pattern reflects both the conservatism of their full-horizon approximations and the absence of information-adaptive online recomputation, rather than an explicit state-dependent reservation for future feasibility.

Fig.~\ref{fig:DisaggregateResult}  reports the power and temperature trajectories of all zones across the 100-building fleet under the disaggregation policy in \eqref{eq:Disagg_profile_zone}. For each zone, define the normalized power and temperature as
\begin{equation}
	\eqfitu{
		\widehat u_{b,i,t}^{\mathrm{ac}}
		:=
		\bigl(u_{b,i,t}^{\mathrm{ac}}
		-\underline u_{b,i}^{\mathrm{ac}}\bigr)
		\big/
		\bigl(\overline u_{b,i}^{\mathrm{ac}}
		-\underline u_{b,i}^{\mathrm{ac}}\bigr)
		\in[0,1].
	}
	\label{eq:Case_normalized_power}
\end{equation}
\begin{equation}
	\eqfitu{
		\widehat x_{b,i,t}^{\mathrm{in}}
		:=
		\bigl(x_{b,i,t}^{\mathrm{in}}
		-x_{b,i,t}^{\mathrm{set}}\bigr)
		\big/
		\beta_{b,i,t}^{\mathrm{set}}
		\in[-1,1].
	}
	\label{eq:Case_normalized_temperature}
\end{equation}
For each building, the common coefficient $\lambda_t$ in \eqref{eq:Disagg_coef_building} interpolates between its lower and upper feasible endpoint profiles. Thus, buildings with wider feasible power intervals contribute larger adjustments, while their zone-level power and temperature profiles follow the corresponding endpoint interpolation. As shown in Fig.~\ref{fig:DisaggregateResult}, the zone-level trajectories within each panel exhibit similar overall temporal trends, while their exact values differ because of heterogeneous thermal parameters and operating constraints. Nevertheless, all normalized power and temperature trajectories remain within their limits throughout real-time operation, consistent with the recursive-feasibility guarantee in Proposition~\ref{prop:Disaggregation}.

These results show that the proposed framework provides greater aggregate flexibility while maintaining feasible zone-level operation.

\subsubsection{Flexibility and Reliability Under Sequentially Revealed Exogenous Uncertainty}

We generate $2{,}000$ Monte Carlo scenarios, each comprising a real-time exogenous-input trajectory and a target aggregate HVAC power-command trajectory. All four cases are evaluated on the same set of scenarios and therefore face identical realized exogenous-input and target aggregate power-command trajectories.
At each period, each case selects the feasible aggregate power closest to the common target within its reported aggregate-flexibility interval, subject to the realized thermal dynamics and the power and temperature constraints. The selected power is implemented using that case’s disaggregation procedure to advance its state. 
The infeasible scenario ratio is defined as the percentage of scenarios in which, at some period, no zone-level HVAC action satisfying the realized thermal dynamics and operating constraints exists from that case's current state for any aggregate power.
 For each scenario, total flexibility is calculated as the sum of the reported aggregate-flexibility interval widths over all periods. The total flexibility ratio for each of Cases~1--3 is defined as its total flexibility divided by that of Case~0, expressed as a percentage.

\begin{figure}[!t]
	\centering
	\includegraphics[width=\linewidth]{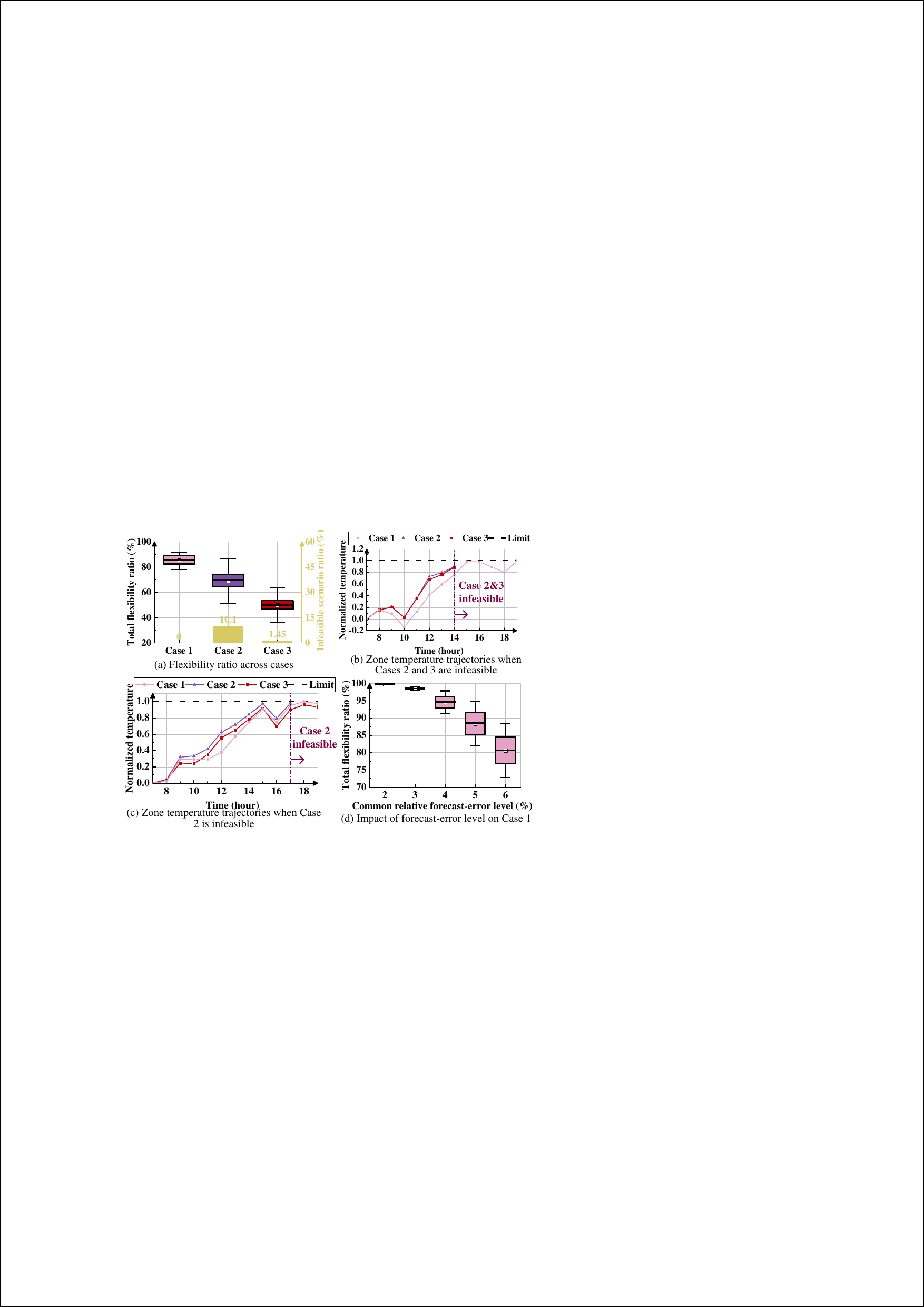}
	\caption{Flexibility and feasibility under sequential information revelation.}
	\label{fig:Sensitivity}
\end{figure}

Fig.~\ref{fig:Sensitivity}(a) compares the distributions of the total flexibility ratio and the corresponding infeasible scenario ratios. Case~1 achieves the highest median flexibility ratio and a tighter distribution than Cases~2 and~3. Its infeasible scenario ratio is $0\%$, compared with $10.10\%$ for Case~2 and $1.45\%$ for Case~3. Because all sampled exogenous inputs lie within the prescribed uncertainty sets, the reachable-set construction guarantees feasible disaggregation for Case~1. By contrast, Cases 2 and 3 rely on fixed forecast-based sets that are neither recomputed from the realized current state and exogenous input nor certified against all admissible future exogenous-input sequences. Sequential deviations from the forecast can therefore invalidate subsequent disaggregation. The lower infeasible ratio of Case~3 results from its smaller and more conservative box, rather than from an explicit treatment of sequential information revelation.

Figs.~\ref{fig:Sensitivity}(b) and~\ref{fig:Sensitivity}(c) show zone-temperature trajectories from two representative infeasible scenarios. In Fig.~\ref{fig:Sensitivity}(b), Cases~2 and~3 are closer to the upper comfort boundary than Case~1 at hour~14. At hour~15, neither case admits any feasible zone-level HVAC action satisfying the realized thermal dynamics and operating constraints. The limited thermal headroom is therefore associated with disaggregation failure in both cases. The resulting state of Case~1 lies in the certified reachable set, so a feasible continuation exists. In Fig.~\ref{fig:Sensitivity}(c), Case~2 encounters such a feasibility failure, whereas Case~3 retains a feasible zone-level solution. At hour~17, the temperature under Case~3 is lower than that under Case~1, leaving greater thermal headroom. This additional margin results from its conservative box rather than an explicit treatment of sequential information revelation, reducing the risk of disaggregation failure at the cost of aggregate flexibility. In contrast, Case~1 utilizes more flexibility while retaining a certified feasible continuation.

Fig.~\ref{fig:Sensitivity}(d) examines the effect of the common relative forecast-error level on Case~1. Here, the forecast-error level denotes the common relative half-width of the uncertainty intervals around both the outdoor-temperature and heat-gain forecasts. As this level increases from $2\%$ to $6\%$, the distribution of the total flexibility ratio shifts downward and becomes more dispersed. Broader uncertainty intervals require larger thermal margins to preserve recursive feasibility over a wider range of future realizations, thereby reducing the reportable flexibility. 
These results characterize the relationship between prescribed forecast-error bounds and reportable flexibility under the adopted thermal model. An economic assessment of forecasting investment additionally requires models linking investment to error bounds and valuing the resulting flexibility. Developing these models is beyond the scope of this study.

Overall, the Monte Carlo results show that the proposed framework preserves feasible disaggregation under sequential information revelation while retaining substantially more flexibility than the baseline methods. It therefore achieves a better flexibility--reliability trade-off.

\subsubsection{Computational Scalability With Fleet Size}

\begin{figure}[!t]
	\centering
	\includegraphics[width=\linewidth]{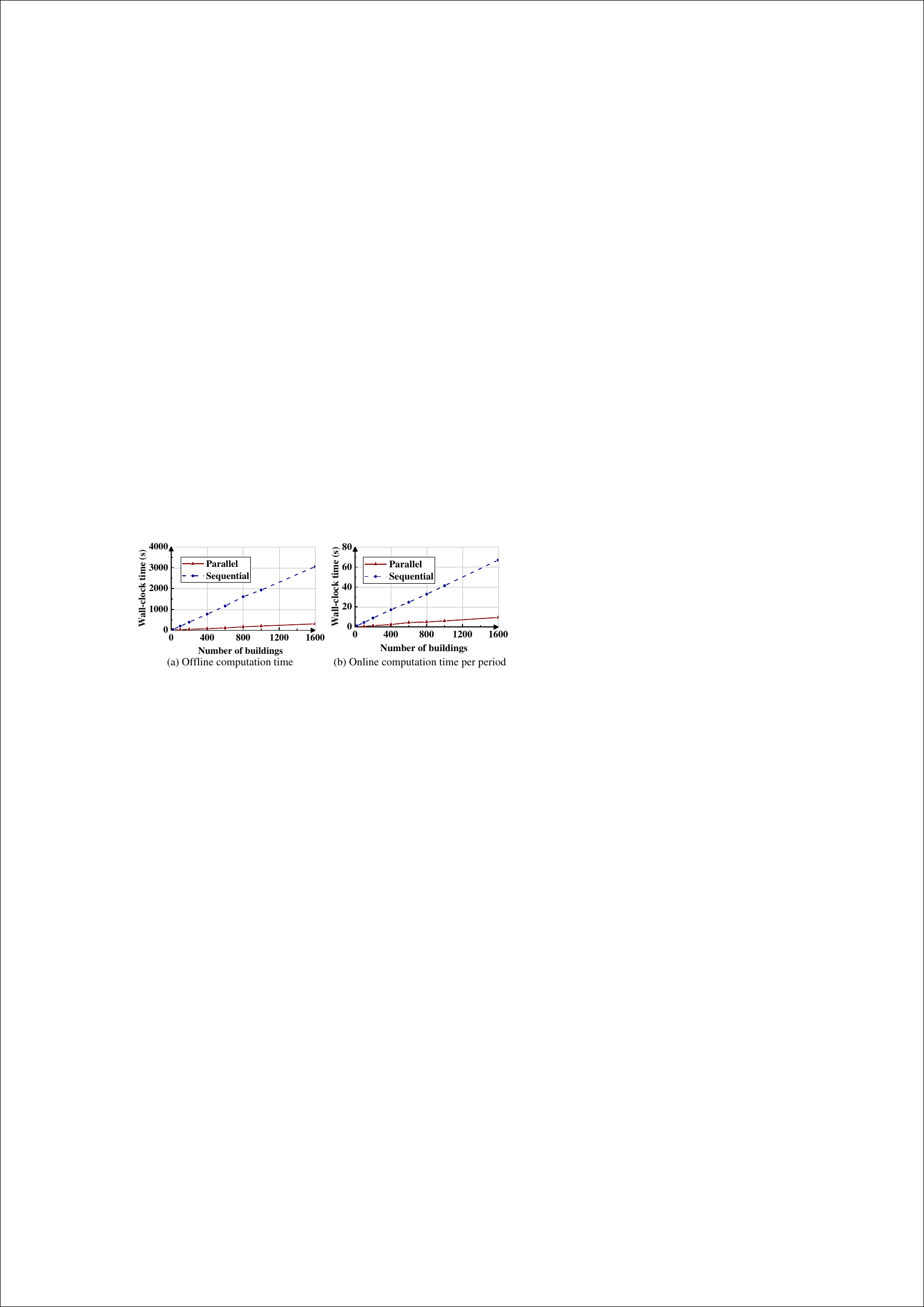}
	\caption{Sequential and parallel wall-clock times versus fleet size.}
	\label{fig:CPUtime}
\end{figure}

Fig.~\ref{fig:CPUtime} compares the sequential and parallel computation times for fleets ranging from 20 to 1,600 buildings. Offline time is the total time required to construct the reachable-set approximations for all buildings over the scheduling horizon, whereas online time is the mean per-period computation time for fleet-level aggregation. Both stages are separable across buildings, as described in Remark~\ref{remark:ComputationalScalability}.
For $B = 1{,}600$, offline computation takes $3{,}051.56$~s sequentially and $311.04$~s with 12 parallel workers. The corresponding online times are $67.00$ and $9.56$~s per period. Under the fixed 12-worker configuration, parallel computation times increase approximately linearly with fleet size. These results demonstrate the computational benefits of the building-separable architecture for large-fleet aggregation. Additional processor cores or distributed workers could further reduce computation time.

\subsubsection{Impact of Zone Count}

Table~\ref{tab:ComputationTime} reports the single-building computation times as the zone count increases. FME denotes exact reachable-set projection using Fourier--Motzkin elimination. A time limit of 24~h is imposed on FME: the $>86{,}400$ entry indicates that the computation did not terminate within this limit, and the larger instances marked by ``--'' were therefore not attempted. 
The offline times for FME and the proposed method refer to the total computation time required to construct the exact reachable sets and their certified inner approximations, respectively, for a single building over the full scheduling horizon. The proposed online time is the mean per-period aggregation time.

\begin{table}[!t]
	\centering
	\caption{Single-building computation times versus zone count.}
	\label{tab:ComputationTime}
	\footnotesize
	\setlength{\tabcolsep}{2.8pt}
	\begin{tabular}{@{}cccc@{}}
		\toprule
		\textbf{Number of zones}
		& \makecell{\textbf{Offline FME}\\\textbf{(s)}}
		& \makecell{\textbf{Offline proposed}\\\textbf{(s)}}
		& \makecell{\textbf{Online proposed}\\\textbf{(s)}} \\
		\midrule
		4  & 27.48       & 1.433  & 0.041 \\
		8  & $>$86,400 & 1.881  & 0.042 \\
		12 & --          & 2.749  & 0.041 \\
		16 & --          & 4.851  & 0.040 \\
		20 & --          & 8.444 & 0.047 \\
		24 & --          & 17.521 & 0.044 \\
		\bottomrule
	\end{tabular}
\end{table}

As the zone count increases from 4 to 24, the proposed offline time increases from $1.433$ to $17.521$~s, while its online time remains below $0.05$~s. In contrast, exact FME exceeds the 24~h limit at eight zones, for which the proposed offline and online times are only $1.881$ and $0.042$~s, respectively. This difference arises because FME may generate an exponentially growing number of intermediate inequalities when projecting a high-dimensional polytope \cite{tan2023non}, whereas the proposed method avoids explicit projection by solving the linear program in \eqref{eq:ReachableSet_Problem_Reformulate} and retaining a compact affine-image representation.

Fig.~\ref{fig:NumZoneImpact} evaluates the corresponding fleet-level aggregation performance. Following the preceding Monte Carlo setup, the fleet contains $B=100$ buildings and is evaluated over the same $2{,}000$ scenarios. For each plotted zone count, all buildings have that number of zones, while their thermal parameters are independently resampled across buildings and zones using the distributions in Table~\ref{tab:HVACParameter}.

\begin{figure}[!t]
	\centering
	\includegraphics[width=\linewidth]{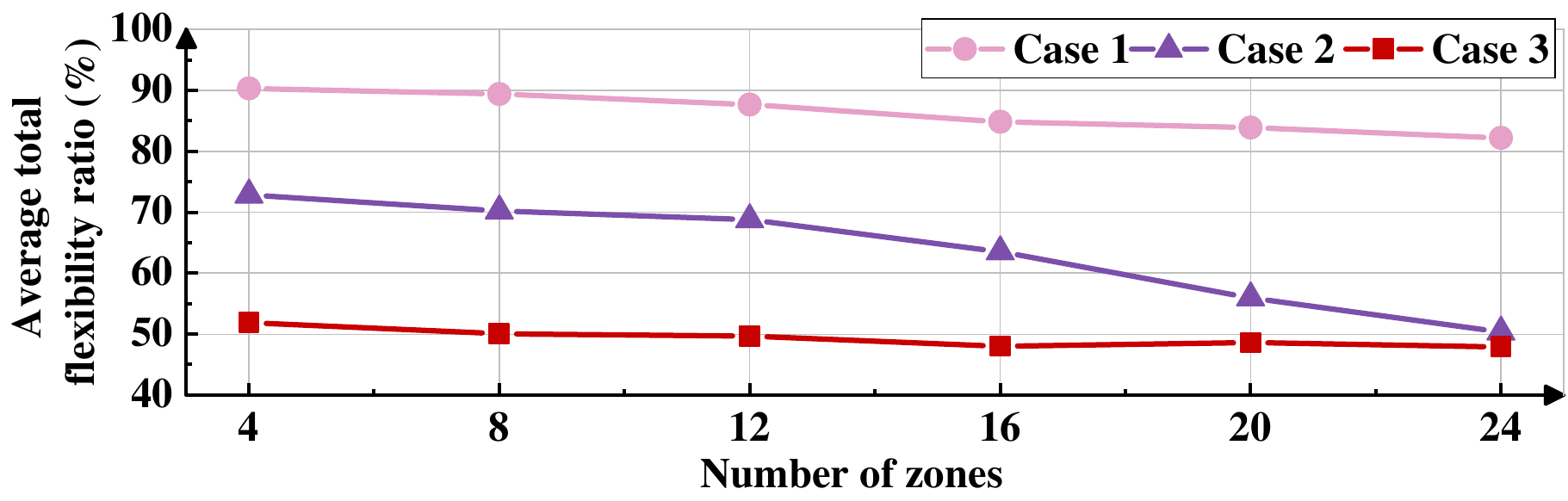}
	\caption{Average total flexibility ratio versus zone count.}
	\label{fig:NumZoneImpact}
\end{figure}

As shown in Fig.~\ref{fig:NumZoneImpact}, the average total flexibility ratio of Case~1 gradually decreases as the reachable-set dimension increases, which is consistent with greater approximation conservatism in higher dimensions. Nevertheless, it remains above $80\%$ at 24 zones and consistently exceeds the ratios of Cases~2 and~3. Together, Table~\ref{tab:ComputationTime} and Fig.~\ref{fig:NumZoneImpact} show that the proposed inner approximation avoids the prohibitive cost of exact projection while maintaining tractable computation and more aggregate flexibility than the full-horizon baselines.

\section{Conclusion}\label{sec:conclusion}
This paper develops an offline--online reachable-set decomposition framework for real-time aggregation of multi-zone HVAC fleets with high-dimensional spatiotemporal coupling under sequentially revealed exogenous uncertainty. Backward reachable sets encode remaining-horizon feasibility as per-period state constraints, while tailored polytopic inner approximations provide tractable representations of cross-zone thermal coupling. Based on the offline-computed reachable sets, the proposed real-time policy performs building-level aggregation in parallel, computes fleet-level power intervals through closed-form Minkowski summation, and disaggregates aggregate power commands through a time-causal policy with a recursive-feasibility guarantee.

Numerical results demonstrate that the proposed framework reports greater aggregate flexibility than the considered fixed full-horizon baselines while maintaining feasible disaggregation under sequentially revealed uncertainty. This improvement reflects both the avoidance of conservative full-horizon power-set approximations and the ability to recompute each per-period interval from the latest state and exogenous input. Meanwhile, the reachable-set construction reserves sufficient thermal headroom to maintain feasibility under every admissible future exogenous-input sequence, yielding a better flexibility–reliability trade-off. The building-separable architecture enables effective parallel computation and shows the potential to support large HVAC fleets. Moreover, the proposed inner approximation avoids the prohibitive projection cost of exact reachable-set computation as the number of zones increases.

This work opens some important directions for future study.
One direction is to decompose the lumped heat-gain signal into source-level components and quantify how uncertainty in these components affects reportable flexibility. Further extensions could address state-dependent or nonlinear disturbance couplings and discrete operating modes.
Another direction is to incorporate economic, comfort, and priority objectives into the disaggregation policy while preserving recursive feasibility, and to investigate the resulting trade-off between richer allocation objectives and real-time computational scalability.



\bibliographystyle{IEEEtran}
\bibliography{BIB}

@article{attar2023data,
	title={Data-driven robust backward reachable sets for set-theoretic model predictive control},
	author={Attar, Mehran and Lucia, Walter},
	journal={IEEE Control Systems Letters},
	volume={7},
	pages={2305--2310},
	year={2023},
	publisher={IEEE}
}

@ARTICLE{Aslam2024Optimal,
  author={Aslam, Waleed and Andrianesis, Panagiotis and Caramanis, Michael},
  journal={IEEE Transactions on Sustainable Energy}, 
  title={Optimal {HVAC} Energy and Regulation Reserve Scheduling in Power Markets}, 
  year={2024},
  volume={15},
  number={1},
  pages={201-214},
}

@article{al2024efficient,
	title={An efficient method for quantifying the aggregate flexibility of plug-in electric vehicle populations},
	author={Al Taha, Feras and Vincent, Tyrone and Bitar, Eilyan},
	journal={IEEE Transactions on Smart Grid},
	volume={16},
	number={4},
	pages={3142--3154},
	year={2025},
	publisher={IEEE}
}

@article{barot2017concise,
	title={A concise, approximate representation of a collection of loads described by polytopes},
	author={Barot, Suhail and Taylor, Josh A},
	journal={International Journal of Electrical Power \& Energy Systems},
	volume={84},
	pages={55--63},
	year={2017},
	publisher={Elsevier}
}

@article{cordova2023aggregate,
	title={Aggregate modeling of thermostatically controlled loads for microgrid energy management systems},
	author={C{\'o}rdova, Samuel and Ca{\~n}izares, Claudio A and Lorca, {\'A}lvaro and Olivares, Daniel E},
	journal={IEEE Transactions on Smart Grid},
	volume={14},
	number={6},
	pages={4169--4181},
	year={2023},
	publisher={IEEE}
}

@article{chen2024unlock,
	title={Unlock the aggregated flexibility of electricity-hydrogen integrated virtual power plant for peak-regulation},
	author={Chen, Siqi and Zhang, Kuan and Liu, Nian and Xie, Yawen},
	journal={Applied Energy},
	volume={360},
	pages={122747},
	year={2024},
	publisher={Elsevier}
}

@article{chen2021leveraging,
	title={Leveraging two-stage adaptive robust optimization for power flexibility aggregation},
	author={Chen, Xin and Li, Na},
	journal={IEEE Transactions on Smart Grid},
	volume={12},
	number={5},
	pages={3954--3965},
	year={2021},
	publisher={IEEE}
}

@article{chen2024wholesale,
	title={Wholesale market participation of {DERAs: DSO-DERA-ISO} coordination},
	author={Chen, Cong and Bose, Subhonmesh and Mount, Timothy D and Tong, Lang},
	journal={IEEE Transactions on Power Systems},
	volume={39},
	number={5},
	pages={6605--6614},
	year={2024},
	publisher={IEEE}
}

@article{cui2024decision,
	title={Decision-oriented modeling of thermal dynamics within buildings},
	author={Cui, Xueyuan and Toubeau, Jean-Francois and Vallee, Francois and Wang, Yi},
	journal={IEEE Transactions on Smart Grid},
	volume={16},
	number={1},
	pages={369--382},
	year={2024},
	publisher={IEEE}
}

@article{han2024analytical,
	title={Analytical Solutions of air-conditioning load flexibility},
	author={Han, Binglong and Li, Hangxin and Wang, Shengwei},
	journal={IEEE Transactions on Smart Grid},
	volume={16},
	number={1},
	pages={441--444},
	year={2024},
	publisher={IEEE}
}

@article{hreinsson2021new,
	title={New insights from the {Shapley-Folkman} lemma on dispatchable demand in energy markets},
	author={Hreinsson, Kari and Scaglione, Anna and Alizadeh, Mahnoosh and Chen, Yonghong},
	journal={IEEE Transactions on Power Systems},
	volume={36},
	number={5},
	pages={4028--4041},
	year={2021},
	publisher={IEEE}
}

@article{hao2025non,
	title={Non-iterative coordinated optimal dispatch of distribution system and microgrids via fast approximation equivalent projection},
	author={Hao, Peng and Huang, Chunyi and Wang, Chengmin and Li, Kangping and Tan, Zhenfei},
	journal={IEEE Transactions on Smart Grid},
	volume={16},
	number={5},
	pages={3934--3948},
	year={2025},
	publisher={IEEE}
}

@article{he2025aggregative,
  title={Aggregative game analysis for energy consumption of multizone {HVAC} systems},
  author={He, Xiongnan and Lin, Zongli and Chang, Qing},
  journal={IEEE Transactions on Control of Network Systems},
  volume={12},
  number={3},
  pages={2194--2206},
  year={2025},
  publisher={IEEE}
}

@article{kurzhanskiy2011reach,
	title={Reach set computation and control synthesis for discrete-time dynamical systems with disturbances},
	author={Kurzhanskiy, Alex A and Varaiya, Pravin},
	journal={Automatica},
	volume={47},
	number={7},
	pages={1414--1426},
	year={2011},
	publisher={Elsevier}
}

@article{lorca2016multistage,
	title={Multistage adaptive robust optimization for the unit commitment problem},
	author={Lorca, Alvaro and Sun, X Andy and Litvinov, Eugene and Zheng, Tongxin},
	journal={Operations Research},
	volume={64},
	number={1},
	pages={32--51},
	year={2016},
	publisher={INFORMS}
}

@article{liu2024continuous,
	title={Continuous-time aggregation of massive flexible {HVAC} loads considering uncertainty for reserve provision in power system dispatch},
	author={Liu, Jingguan and Ai, Xiaomeng and Fang, Jiakun and Cui, Shichang and Wang, Shengshi and Yao, Wei and Wen, Jinyu},
	journal={IEEE transactions on Smart Grid},
	volume={15},
	number={5},
	pages={4835--4849},
	year={2024},
	publisher={IEEE}
}

@article{liu2025coupling,
	title={Coupling-Aware Aggregation of Multi-Zone {HVAC} Loads under Uncertainty: A Two-level Framework},
	author={Liu, Jingguan and Jiang, Han and Ai, Xiaomeng and Wang, Shengshi and Xue, Xizhen and Cui, Shichang and Hou, Jinming and Fang, Jiakun and Wen, Jinyu},
	journal={IEEE Transactions on Smart Grid},
	  year={2026},
  volume={17},
  number={3},
  pages={2077-2091},
	publisher={IEEE}
}

@article{liu2025leveraging,
	title={Leveraging Time-Causal State Variable Aggregation for Real-Time Schedule of Massive Air Conditioners},
	author={Liu, Jingguan and Ai, Xiaomeng and Cui, Shichang and Xue, Xizhen and Wang, Shengshi and Fang, Jiakun and Wen, Jinyu and Shi, Yang},
	journal={IEEE Transactions on Smart Grid},
	volume={16},
	number={3},
	pages={2389-2403},
	year={2025},
	publisher={IEEE}
}

@article{liu2026scalable,
	title={Scalable Building {HVAC} Control Through Laxity-Based Reinforcement Learning},
	author={Liu, Ruohong and Pan, Yuxin and Chen, Yize},
	journal={IEEE Transactions on Smart Grid},
	  year={2026},
  volume={17},
  number={5},
  pages={3898-3909},
	publisher={IEEE}
}

@article{rousseau2025uncertainty,
	title={Uncertainty-Aware Flexibility of {HVAC} Systems in Buildings: From Quantification to Provision},
	author={Rousseau, Julie and Cai, Hanmin and Heer, Philipp and Orehounig, Kristina and Hug, Gabriela},
	journal={IEEE Transactions on Smart Grid},
	year={2026},
	volume={17},
	number={1},
	pages={269-282},
	publisher={IEEE}
}

@article{ryzhov2019model,
	title={Model predictive control of indoor microclimate: existing building stock comfort improvement},
	author={Ryzhov, A and Ouerdane, Henni and Gryazina, Elena and Bischi, Aldo and Turitsyn, K},
	journal={Energy Conversion and Management},
	volume={179},
	pages={219--228},
	year={2019},
	publisher={Elsevier}
}

@article{song2018thermal,
	title={Thermal battery modeling of inverter air conditioning for demand response},
	author={Song, Meng and Gao, Ciwei and Yan, Huaguang and Yang, Jianlin},
	journal={IEEE Transactions on Smart Grid},
	volume={9},
	number={6},
	pages={5522--5534},
	year={2018},
	publisher={IEEE}
}

@article{tian2022real,
	title={Real-time flexibility quantification of a building {HVAC} system for peak demand reduction},
	author={Tian, Guanyu and Sun, Qun Zhou and Wang, Wenyi},
	journal={IEEE Transactions on Power Systems},
	volume={37},
	number={5},
	pages={3862--3874},
	year={2022},
	publisher={IEEE}
}

@article{tan2023non,
	title={Non-iterative solution for coordinated optimal dispatch via equivalent projection—{Part I: Theory}},
	author={Tan, Zhenfei and Yan, Zheng and Zhong, Haiwang and Xia, Qing},
	journal={IEEE Transactions on Power Systems},
	volume={39},
	number={1},
	pages={890--898},
	year={2024},
	publisher={IEEE}
}

@article{wang2025non,
	title={Non-iterative coordination of interconnected power grids via dimension-decomposition-based flexibility aggregation},
	author={Wang, Siyuan and Feng, Cheng and You, Fengqi},
	journal={IEEE Transactions on Power Systems},
	  year={2026},
	volume={41},
	number={1},
	pages={705-717},
	publisher={IEEE}
}

@article{wetzlinger2025backward,
	title={Backward reachability analysis of perturbed continuous-time linear systems using set propagation},
	author={Wetzlinger, Mark and Althoff, Matthias},
	journal={IEEE Transactions on Automatic Control},
	 year={2026},
  volume={71},
  number={1},
  pages={352-367},
	publisher={IEEE}
}

@article{wu2026energy,
	title={Energy Efficiency of Commercial {HVAC}-based Virtual Batteries for Load Shifting},
	author={Wu, Weimin and Lei, Shunbo and Sun, Qun Zhou and Mathieu, Johanna L},
	journal={IEEE Transactions on Power Systems},
	 year={2026},
  volume={41},
  number={1},
  pages={454-467},
	publisher={IEEE}
}

@article{wang2023control,
	title={A control framework to enable a commercial building {HVAC} system for energy and regulation market signal tracking},
	author={Wang, Wenyi and Tian, Guanyu and Sun, Qun Zhou and Liu, Hongrui},
	journal={IEEE Transactions on Power Systems},
	volume={38},
	number={1},
	pages={290--301},
	year={2023},
	publisher={IEEE}
}

@article{xu2025frequency,
	title={Frequency regulation provision by {EV} aggregator with dynamic balance between user charging anxiety and aggregate flexibility},
	author={Xu, Biao and Luan, Wenpeng and Zhao, Bochao and Long, Chao},
	journal={IEEE Transactions on Smart Grid},
	volume={16},
	number={4},
	pages={2953--2967},
	year={2025},
	publisher={IEEE}
}

@ARTICLE{yang2020HVAC,
  author={Yang, Yu and Hu, Guoqiang and Spanos, Costas J.},
  journal={IEEE Transactions on Automation Science and Engineering}, 
  title={{HVAC} Energy Cost Optimization for a Multizone Building via a Decentralized Approach}, 
  year={2020},
  volume={17},
  number={4},
  pages={1950-1960},
}

@article{ye2025aggregation,
	title={Aggregation of heterogeneous {DERs} by the subgradient vertex search method},
	author={Ye, Wenkai and Xie, Yilin and Xiao, Yao and Wei, Yifan and Chen, Jiali and Li, Yalun and Wang, Hewu and Ouyang, Minggao},
	journal={IEEE Transactions on Smart Grid},
	 year={2025},
	volume={16},
	number={6},
	pages={5053-5064},
	publisher={IEEE}
}

@article{yu2025flexibility,
	title={Flexibility aggregation and power disaggregation of distributed shared energy storage for coordinating virtual power plant in regulation market},
	author={Yu, Hongfei and Yu, Shunjiang and Xu, Fucong and Chen, Yuge and Zhang, Tianhan and Li, Zhicheng and Zhang, Weijun and Yang, Li and Lin, Zhenzhi},
	journal={IEEE Transactions on Sustainable Energy},
	  year={2026},
	volume={17},
	number={2},
	pages={970-984},
	publisher={IEEE}
}

@article{yan2023real,
	title={Real-time feedback based online aggregate {EV} power flexibility characterization},
	author={Yan, Dongxiang and Huang, Shihan and Chen, Yue},
	journal={IEEE Transactions on Sustainable Energy},
	volume={15},
	number={1},
	pages={658--673},
	year={2024},
	publisher={IEEE}
}

@article{zhao2017geometric,
	title={A geometric approach to aggregate flexibility modeling of thermostatically controlled loads},
	author={Zhao, Lin and Zhang, Wei and Hao, He and Kalsi, Karanjit},
	journal={IEEE Transactions on Power Systems},
	volume={32},
	number={6},
	pages={4721--4731},
	year={2017},
	publisher={IEEE}
}

@article{zhao2025analytical,
	title={An analytical feasibility condition for the multi-stage robust scheduling of energy storage systems with application on {SCUC}},
	author={Zhao, Jiexing and Zhai, Qiaozhu and Zhou, Yuzhou and Cao, Xiaoyu},
	journal={IEEE Transactions on Power Systems},
	volume={40},
	number={1},
	pages={435--448},
	year={2025},
	publisher={IEEE}
}

@article{zheng2024capacity,
	title={Capacity aggregation and online control of clustered energy storage units},
	author={Zheng, Boshen and Wei, Wei and Xu, Yin and Chen, Yue},
	journal={IEEE Transactions on Sustainable Energy},
	volume={15},
	number={3},
	pages={1546--1561},
	year={2024},
	publisher={IEEE}
}

@inproceedings{sadraddini2019linear,
	title={Linear encodings for polytope containment problems},
	author={Sadraddini, Sadra and Tedrake, Russ},
	booktitle={2019 IEEE 58th conference on decision and control (CDC)},
	pages={4367--4372},
	year={2019},
	organization={IEEE}
}

@inproceedings{yin2019finite,
	title={Finite horizon backward reachability analysis and control synthesis for uncertain nonlinear systems},
	author={Yin, He and Packard, Andrew and Arcak, Murat and Seiler, Peter},
	booktitle={2019 American Control Conference (ACC)},
	pages={5020--5026},
	year={2019},
	organization={IEEE}
}

@misc{FERC_Order2222,
	title        = {{FERC} Order No. 2222: Fact Sheet},
	howpublished = {[Online]. Available: \url{https://www.ferc.gov/media/ferc-order-no-2222-fact-sheet}},
	note         = {Accessed: Nov.~6, 2025},
	year         = {2020},
	month        = {Sep.}
}

@misc{gurobi, 
	author = {{Gurobi Optimization, LLC}}, 
	title = {{Gurobi Optimizer Reference Manual}}, 
	year = 2026, 
	howpublished = {[Online]. Available:
    \url{https://www.gurobi.com}}, 
    note         = {Accessed: Aug.~6, 2026}
}

\end{document}